\newtheorem{theo}{Theorem}
\newtheorem{lem}{Lemma}
\newtheorem{defi}{Definition}
\newtheorem{rem}{Remark}
\newcommand{\eqal}[1]{\begin{equation}
\begin{aligned}
#1
\end{aligned}
\end{equation}}
\newcommand{\eqaln}[1]{\begin{equation} \nonumber
\begin{aligned}
#1
\end{aligned}
\end{equation}}
\newcommand{\eqs}[1]{
\begin{equation}
\left\lbrace \begin{aligned}
#1
\end{aligned} \right.
\end{equation}
}
\newcommand{\bos}[1]{\boldsymbol{#1}}
\newcommand{\hapx}{H_{\alpha}(X)_p}
\newcommand{\pr}[3]{H_{\a}\left(#1|#2\right)_{#3}}
\DeclarePairedDelimiterX{\infdivx}[2]{(}{)}{%
  #1\;\delimsize\|\;#2%
}
\newcommand{\petz}{D_\a \infdivx}
\newcommand{\ket}[1]{\vert #1 \rangle}
\newcommand{\bra}[1]{\langle #1 \vert}
\newcommand{\kb}[2]{\left| #1 \vphantom{#2} \right>\left< #2 \vphantom{#1} \right|} 
\newcommand{\proj}[1]{\kb{#1}{#1}} 
\renewcommand{\a}{\alpha}
\newcommand{\sx}{\mathcal{X}}
\newcommand{\sy}{\mathcal{Y}}
\newcommand{\sxm}{\sx \setminus \left\lbrace 1 \right\rbrace}
\newcommand{\gmajj}{\prec_{\scaleto{c}{3pt}}}
\newcommand{\cmajj}{\prec_{\scaleto{\sx}{4pt}}} 
\newcommand{\tr}[2]{\mathrm{Tr}_{#1} \left[ #2 \right]}
\newcommand{\tp}{t}
\newcommand{\tpp}{\tilde{t}}
\begin{document}

\title{A Tight Uniform Continuity Bound for the Arimoto-R\'enyi Conditional Entropy and its Extension to Classical-Quantum States}

\author{Michael~G~Jabbour
        and~Nilanjana~Datta
\thanks{The authors are in the Department of Applied Mathematics and Theoretical Physics, Centre for Mathematical Sciences, University of Cambridge, Cambridge CB3 0WA, United Kingdom}}

\maketitle

\begin{abstract}
  We prove a tight uniform continuity bound for Arimoto's version of the conditional $\alpha$-R\'enyi entropy for the range $\alpha \in [0, 1)$.  This definition of the conditional $\alpha$-R\'enyi entropy is the most natural one among the multiple forms which exist in the literature, since it satisfies two desirable properties of a conditional entropy, namely, the fact that conditioning reduces entropy, and that the associated reduction in uncertainty cannot exceed the information gained by conditioning. Furthermore, it has found interesting applications in various information theoretic tasks such as guessing with side information and sequential decoding. This conditional entropy reduces to the conditional Shannon entropy in the limit $\alpha \to 1$, and this in turn allows us to recover the recently obtained tight uniform continuity bound for the latter from our result. Finally, we apply our result to obtain a tight uniform continuity bound for the conditional $\alpha$-R\'enyi entropy of a classical-quantum state, for $\alpha$ in the same range as above. This again yields the corresponding known bound for the conditional entropy of the state in the limit $\alpha \to 1$.
\end{abstract}

\begin{IEEEkeywords}
Arimoto-R\'enyi conditional entropy, continuity bound, majorization theory, quantum conditional R\'enyi entropy, Shannon theory.
\end{IEEEkeywords}

%
\IEEEpeerreviewmaketitle

\section{Introduction}

\IEEEPARstart{I}{n} his seminal paper of 1948~\cite{Shannon}, Claude Shannon introduced the notion of entropy of a discrete random variable $X$ as a measure of its uncertainty or equivalently as a measure of the amount of information we gain on average when we learn its value. The Shannon entropy plays a key role in information theory since it characterises the optimal rate of lossless data compression for a discrete memoryless source. A related entropic quantity or information measure is the conditional Shannon entropy $H(X|Y)$, which quantifies the reduction in the uncertainty of a random variable $X$ when another random variable $Y$ is observed. This and other Shannon information measures (e.g.~the Kullback-Leibler divergence (or relative entropy) and the mutual information) are also of operational significance since they characterise either optimal rates of information-theoretic tasks or fundamental limits of certain statistical inference problems.
Two fundamental properties of a conditional entropy (or equivocation), which are indeed satisfied by the conditional Shannon entropy, are the following:
\begin{enumerate}
\item $H(X|Y) \leq H(X)$;
\item $H(X|Y) \geq H(X) - \log |{\mathcal{Y}}|$, (where~$\mathcal{Y}$ denotes the alphabet of the random variable $Y$).
\end{enumerate}

The above properties are intuitively natural since they ensure that $(i)$ obtaining additional information (in the form of knowledge of a random variable $Y$) can only decrease the uncertainty of a random variable $X$, and $(ii)$ this reduction in uncertainty cannot be more than the amount of information (in bits\footnote{In this paper, we take logarithms to base $2$.}) that has been obtained.
\smallskip

In 1960 Alfred R\'enyi introduced a one-parameter family of entropies~\cite{Renyi1961}, $H_\alpha(X)$ (with $\alpha \in [0,1)\cup (1,\infty)$)  which generalised Shannon's definition of entropy and reduced to it in the limit $\alpha \to 1$. The $\alpha$-R\'enyi entropies are of independent relevance in information theory and also arise naturally in various other branches of mathematics including probability theory, functional analysis, convex geometry and additive combinatorics. In analogy to the Shannon information measures, one can define other information measures related to the $\alpha$-R\'enyi entropy, e.g.~the $\alpha$-R\'enyi divergence and the conditional R\'enyi entropy of order $\alpha$. However, unlike the conditional Shannon entropy, the definition of conditional R\'enyi entropy is not unique. 

The Shannon entropy~\cite{Shannon} extends quite naturally to the conditional case by means of conditional probabilities~\cite{CoverThomas}. In contrast, there are multiple ways to define a notion of conditional R\'enyi entropy, and several definitions have been proposed in the literature~\cite{Arimoto,Cachin,PropertiesRenyi,WorldRenyi,RennerWolfDefCondRenyi,HayashiDefCondRenyi,SkoricObiVerbitskiySchoenmakersDefCondRenyi,FehrBerensCondRenyi,TeixeiraMatosAntunesCondRenyi}. However, some of these definitions are somewhat unsatisfactory since they do not satisfy both of the desired properties $1$ and $2$ mentioned above.
\smallskip

There is, however, one definition of the conditional R\'enyi entropy which respects both of these properties, that is, the $\a$-Arimoto-R\'enyi Conditional Entropy (ARCE) which was introduced by Arimoto~\cite{Arimoto}. Furthermore, it is consistent with the conditional Shannon entropy, in the sense that one recovers the latter from it in the limit $\alpha \to 1$ (a property which does not necessarily hold for the other definitions in the literature). As a consequence, the ARCE has attracted much attention from the information theory community.

The ARCE has been proved to be of operational significance in diverse information-theoretic problems. These include, among others, guessing problems with side information~\cite{Arikan, SasonVerdu2, BracherHofLapidoth,Sundaresan}, sequential decoding~\cite{Arikan}, task encoding with side information~\cite{BunteLapidoth}, list-size capacity of discrete memoryless channels with feedback~\cite{BunteLapidoth2}, Bayesian $M$-ary hypothesis testing~\cite{SasonVerdu}, generalisations of Fano-type inequalities~\cite{Sakai}. See also~\cite{HayashiTan,HayashiTan2}.
\smallskip

Another desirable property of information measures such as entropies and conditional entropies is continuity. A continuity bound quantifies the amount by which the information measure changes when the underlying probability distribution changes by a small amount, and hence provides an estimate of the robustness of the information measure with respect to the probability distribution. For example, given two probability distributions which are at a total variation distance of at most $\epsilon \in (0,1)$, a {\em{uniform continuity bound}} for an entropy is a bound on the difference between their entropies in terms of $\epsilon$ and the alphabet size. It does not depend on the specifics of the probability distributions themselves. The bound is said to be tight if there exists a pair of probability distributions for which it is saturated.

The importance of continuity bounds lies in the fact that we usually lack precise knowledge of the probability distribution and instead only have a rough estimate of it. Hence it is useful to have tight bounds on the error incurred by approximating the probability distribution. Continuity bounds are not only of fundamental interest but also have useful applications in information theory, for example in the study of channel capacities. 

Various such continuity bounds have been derived. A tight uniform continuity bound for the Shannon entropy is attributed to Csisz\'ar, who derived it from Fano's inequality. See also~\cite{Sason2013}. However, it seems to have first appeared in a paper by Zhang~\cite{Zhang}. A tight uniform continuity bound for the conditional Shannon entropy was recently derived by Alhejji and Smith~\cite{AlhejjiSmith}. In the quantum case, Fannes~\cite{Fannes} was the first to prove a uniform continuity bound for the von Neumann entropy, which was later sharpened by Audenaert~\cite{Audenaert}. It was independently obtained by Petz~\cite{PetzCon}. Audenaert's result provides a tight upper bound on the von Neumann entropy of two quantum states of a finite-dimensional system in terms of their trace distance and the dimension of the underlying Hilbert space. Audenaert also derived tight uniform continuity bounds for quantum R\'enyi entropies of order $\a \in [0,1)$ (see Appendix of~\cite{Audenaert}), which covers the case of classical R\'enyi entropies. Continuity bounds for a large family of entropies (which included R\'enyi entropies in the above range) were derived using different proof techniques in~\cite{Eric1}. The authors of~\cite{Eric1} further investigated the case $\alpha > 1$ using the notion of ``majorization flow" in~\cite{Eric2}, obtaining a bound which is sharper than previously known bounds. A similar continuity bound was derived for the conditional von Neumann entropy by Alicki and Fannes~\cite{AlickiFannes}, which in turn was later sharpened by Winter~\cite{Winter2016}. Alhejji and Smith's result for the conditional Shannon entropy was extended to the case of the conditional entropy of a classical-quantum state (with the conditioning being on the classical system) by Wilde~\cite{Wilde2020}.

In this paper we study another important property of the ARCE of order $\alpha$, for the range $\alpha \in [0,1)$. More precisely, we establish a tight uniform continuity bound on the difference of the ARCEs of two joint probability distributions $p_{XY}$ and $q_{XY}$ which are close in total variation distance. See Theorem~\ref{theo:main} of Section~\ref{sec:MainResults} below. To prove this result we introduce $(i)$ the notion of ${\mathcal{X}}$-{\em{majorization}}, which is a special case of the concept of {\em{conditional majorization}} which was introduced by Gour {\em{et al}}\cite{CondMaj} (here ${\mathcal{X}}$ denotes the alphabet of the random variable $X$), and $(ii)$ the notion of a function of a joint probability distribution $p_{XY}$ being {\em{marginally Schur concave}}, that is, Schur concave under ${\mathcal{X}}$-{\em{majorization}}. Ideas and techniques from majorization theory have previously been employed to derive bounds on entropies (see e.g.~\cite{Cicalese2018, Sason2018} and references therein).
\smallskip

We then apply our result to prove a tight uniform continuity bound for the ARCE of order $\alpha$, for $\alpha \in [0,1)$, for classical-quantum (c-q) systems, with the conditioning being done on the classical system. We denote this quantity by $H_\alpha(A|Y)_\rho$, where $Y$ is a classical system ({\em{i.e.}}~ a random variable), $A$ is a finite-dimensional quantum system, and $\rho_{AY}$ is the c-q state of the composite system $AY$ (see \eqref{eq:defi:PetzRenyi} below). See Theorem~\ref{theo:main3} of Section~~\ref{sec:MainResults} in the paper.
\medskip

\noindent
{\bf{Layout of the paper:}} We proceed by first introducing the necessary definitions of the relevant classical and quantum entropies and conditional entropies in Section~\ref{prelim}. We state our main results in Section~\ref{sec:MainResults}, 
and prove them in Section~\ref{sec:proofs}. The key notions of $\sx$-majorization and marginally Schur concave functions, which we exploit in our proofs, are introduced in Section~\ref{sec:ProofIngredients}. We conclude the paper with a discussion and some open problems. Some further tools that we use are stated and proved in the Appendix.

\section{Mathematical preliminaries}\label{prelim}

\subsection{Classical systems}\label{entropies-classical}

Let $\sx \coloneqq \left\lbrace 1,2, \cdots, |\sx| \right\rbrace$, and let $\mathcal{P}_{\sx}$ denote the set of probability distributions on $\sx$. Let $p_X \in \mathcal{P}_{\sx}$. The Shannon entropy of a variable $X$ with distribution $p_X$ is defined as~\cite{Shannon}
\begin{equation}
    H(X)_p \equiv H(p_X) \coloneqq - \sum_{x \in \sx} p_X(x) \log p_X(x).
\end{equation}
The $\alpha$-R\'enyi entropy, for $\alpha \in [0,1) \cup (1,\infty)$ is defined as follows:
\begin{equation}
	\hapx \coloneqq \frac{1}{1-\alpha} \log \left( \sum_{x \in \sx} p_X^{\a}(x) \right).
\end{equation}
The Shannon and $\alpha$-R\'enyi entropies are part of larger set of functions related to the concept of majorization. Given $\bos{u} \in \mathbb{R}^d$ for some dimension $d$, define $\bos{u}^{\downarrow} \in \mathbb{R}^d$ to be the vector containing the elements of $\bos{u}$ arranged in non-increasing order. For $\bos{u}, \bos{v} \in \mathbb{R}^d$, we say $\bos{u}$ is majorized by $\bos{v}$, written $\bos{u} \prec \bos{v}$~\cite{Majorization}, if
\eqal{
  & \sum_{j=1}^k \bos{u}^{\downarrow}_j \leq \sum_{j=1}^k \bos{v}^{\downarrow}_j, \quad \forall k = 1, \cdots, d-1, \\
  \mathrm{and} \quad & \sum_{j=1}^d \bos{u}^{\downarrow}_j = \sum_{j=1}^d \bos{v}^{\downarrow}_j.
}
\smallskip

A function $f:\mathbb{R}^d \to \mathbb{R}$ is said to be Schur-convex if $f(\bos{u}) \leq f(\bos{v})$ for any pair $\bos{u}, \bos{v} \in \mathbb{R}^d$ with $\bos{u} \prec \bos{v}$, and it is said to be Schur-concave if $(-f)$ is Schur-convex. The Shannon and $\alpha$-R\'enyi entropies are notable examples of Schur-concave functions when they are taken as functions of the vectors of probability distributions. An interesting property of majorization is the following: for $\bos{u}, \bos{v} \in \mathbb{R}^d$, $\bos{u} \prec \bos{v}$ if and only if there exists a doubly-stochastic matrix $D$ of dimension $d \times d$ such that $\bos{u} = D \bos{v}$~\cite{HardyLittlewoodPolya}. This property will be useful in our proof.
\medskip

Now let $\sy \coloneqq \left\lbrace 1,2, \cdots, |\sy| \right\rbrace$ and let $\mathcal{P}_{\sx \times \sy}$ denote the set of probability distributions on $\sx \times \sy$. For a pair of random variables $X$ and $Y$ with joint distribution $p_{XY} \in \mathcal{P}_{\sx \times \sy}$, the conditional Shannon entropy $ H(X|Y)_p$ of $X$ conditioned on $Y$ is defined as
\eqal{
  H(X|Y)_p & \coloneqq H(p_{XY}) - H(p_Y) \\
  & = - \sum_{y \in \sy} \sum_{x \in \sx} p_{XY}(x,y) \log p_{X|Y}(x|y),
}
where $p_Y$ denotes the marginal distribution of $Y$.

As mentioned in the Introduction, several definitions of the conditional R\'enyi entropy have been proposed in the literature~\cite{FehrBerensCondRenyi,TeixeiraMatosAntunesCondRenyi}. In this paper we focus on Arimoto's version of the conditional R\'enyi entropy:
\begin{defi} \label{defi:ArimotoRenyi}
For a pair of random variables $X$ and $Y$ with joint distribution $p_{XY} \in \mathcal{P}_{\sx \times \sy}$, the $\a$-Arimoto-R\'enyi conditional entropy, (in short the $\a$-ARCE), for $\alpha \in [0,1) \cup (1,\infty)$ is defined as follows~\cite{Arimoto}:
\begin{equation} \label{eq:ArimotoRenyi2}
	\pr{X}{Y}{p} \coloneqq \frac{\a}{1-\a} \log \left( \sum_{y \in \sy} \left[ \sum_{x \in \sx} p_{XY}^{\a}(x,y) \right]^{1/\a} \right).
\end{equation}
\end{defi}
In the limit $\a \to 1$ it reduces to the conditional Shannon entropy.
\smallskip

Unlike the Shannon and $\alpha$-R\'enyi entropies, the $\a$-ARCE is not Schur-concave. However, it satisfies a weaker notion of Schur concavity which we introduce in Section~\ref{sec:ProofIngredients} and which we exploit in our proofs. This leads us to introduce the concepts of $\mathcal{X}$-{\em{majorization}} and {\em{marginally Schur-concave functions}} (see Section~\ref{sec:ProofIngredients}).
\smallskip

Finally, we recall the definition of the {\em{total variation distance }} (TV) between two probability distributions $p_{XY}, q_{XY} \in \mathcal{P}_{\sx \times \sy}$:
	$\mathrm{TV}(p_{XY},q_{XY}) \coloneqq \frac{1}{2} \sum_{x \in \sx} \sum_{y \in \sy} |p_{XY}(x,y) - q_{XY}(x,y)|.$

\subsection{Classical-quantum systems}\label{entropies-quantum}

In this section we consider finite-dimensional bipartite quantum (and classical-quantum) systems. Their states are given by density matrices, \textit{i.e.}~positive semidefinite operators of unit trace acting on the Hilbert space associated with the system. The trace distance between two states $\rho$ and $\sigma$ of a quantum system is given by $\frac{1}{2}||\rho-\sigma||_1$ where for any operator $A$, acting on the Hilbert space of the system, $||A||_1 := {\mathrm{Tr}}{\sqrt{A^\dagger A}}$.
\medskip

Let us introduce some relevant entropic quantities.
\begin{defi}
Let $\rho_{AB}$ be the density matrix of a bipartite finite-dimensional quantum system $AB$. Then for $\a \in [0, 1) \cup (1, \infty)$ the quantum conditional $\a$-R\'enyi entropy of $A$ given $B$ of the state $\rho_{AB}$ is defined as follows~\cite{Petz}:
\begin{equation} \label{eq:defi:PetzRenyi}
	\pr{A}{B}{\rho} \coloneqq \frac{\a}{1-\a} \log \tr{}{\left( \tr{A}{\rho_{AB}^\a} \right)^{1/\a}}.
\end{equation}
\end{defi}
In the limit $\a \to 1$ it reduces to the quantum conditional entropy 
\begin{align}
    H(A|B)_{\rho} := H({\rho}_{AB}) - H(\rho_B),
\end{align}
where $\rho_B:= \tr{A}{\rho_{AB}}$ denotes the reduced state of the system $B$, and $H(\rho):= - {\mathrm{Tr}}(\rho \log \rho)$ denotes the von Neumann entropy of the state $\rho$.
The quantity $\pr{A}{B}{\rho}$ can also be expressed as follows~\cite{Petz,Tomamichel}:
\begin{equation}
	\pr{A}{B}{\rho} = \sup_{\sigma_B} \left( - \petz{\rho_{AB}}{\mathbb{I}_A \otimes \sigma_B} \right),
\end{equation}
where $\mathbb{I}_A$ denotes the identity operator acting on the system $A$, the supremum is taken over all states $\sigma_B$ of the system $B$, and where for two states $\rho$ and $\sigma$, $\petz{\rho}{\sigma}$ denotes the quantum $\a$-R\'enyi divergence of $\rho$ with respect to $\sigma$. The latter is defined as~\cite{Petz}
\begin{equation}
  \petz{\rho}{\sigma} \coloneqq \frac{1}{\a-1} \log \tr{}{\rho^\a \sigma^{1-\a}},
\end{equation}
if $\mathrm{supp}(\rho) \subseteq \mathrm{supp}(\sigma)$ and $\petz{\rho}{\sigma} \coloneqq + \infty$ else.

We are interested in the conditional $\a$-R\'enyi entropy defined in \eqref{eq:defi:PetzRenyi} since it can be related to the $\a$-Arimoto-R\'enyi conditional entropy when the bipartite system is classical-quantum (c-q) and the conditioning is made on the classical system. Indeed, consider a finite dimensional c-q system with density matrix $\rho_{AY}$, where $A$ denotes a quantum system and $Y$ denotes a classical system of dimension $|\sy|$. We can always write the density matrix as
\begin{equation} \label{eq:rhoAY2}
	\rho_{AY} = \sum_{y \in \sy} r_Y(y) \rho_A^y \otimes \proj{y}_Y,
\end{equation}
where $r_Y \in \mathcal{P}_{\sy}$ and $\left\lbrace \rho_A^y \right\rbrace_{y \in \sy}$ is a set of quantum states of the system $A$. For each $y \in \sy$, consider the spectral decomposition of $\rho_A^y$:
\begin{equation} \label{eq:rhoAY}
	\rho_A^y = \sum_{x \in \sx} r_{X|Y}(x|y) \proj{\phi^y_x}_A,
\end{equation}
where $|\sx| = d_A$ is the dimension of system $A$ and $r_{X|Y}$ is a conditional probability distribution.

We can write the bipartite state as
\begin{equation}
	\rho_{AY} = \sum_{y \in \sy, x \in \sx} r_{XY}(x,y) \proj{\phi^y_x}_A \otimes \proj{y}_Y,
\end{equation}
where we defined $r_{XY}(x,y) = r_Y(y) r_{X|Y}(x|y)$ for all $x \in \sx, y \in \sy$, so that $r_{XY} \in \mathcal{P}_{\sx \times \sy}$. In that case,
\eqal{ \label{eq:PetzToArimoto}
  & \pr{A}{Y}{\rho} \\
  & = \frac{\a}{1-\a} \log \mathrm{Tr} \left[ \left( \mathrm{Tr}_A \left[ \sum_{y \in \sy, x \in \sx} r_{XY}^\a(x,y) \right. \right. \right. \\
  & \hspace{3.5cm} \left. \left. \left. \vphantom{\sum_{y \in \sy, x \in \sx}} \times \proj{\phi^y_x,y}_{AY} \right] \right)^{1/\a} \right] \\
	& = \frac{\a}{1-\a} \log \tr{}{\left( \sum_{y \in \sy, x \in \sx} r_{XY}^\a(x,y) \proj{y}_Y \right)^{1/\a}} \\
	& = \frac{\a}{1-\a} \log \tr{}{ \sum_{y \in \sy} \left( \sum_{x \in \sx} r_{XY}^\a(x,y) \right)^{1/\a} \proj{y}_Y} \\
	& = \frac{\a}{1-\a} \log \left( \sum_{y \in \sy} \left( \sum_{x \in \sx} r_{XY}^\a(x,y) \right)^{1/\a} \right) \\
	& = \pr{X}{Y}{r},
}
where $\ket{\phi^y_x,y}_{AY} \equiv \ket{\phi^y_x}_A \otimes \ket{y}_Y$ and $\pr{X}{Y}{r}$ is the classical $\a$-ARCE for a pair of random variables $X$ and $Y$ with joint probability distribution $r_{XY}$.

In the limit $\a \to 1$, $\pr{A}{Y}{\rho}$ reduces to the conditional entropy of the c-q state $\rho_{AY}$:
\begin{align}\label{q-cond}
 H(A|Y)_{\rho}:= \sum_y r_Y(y) H(\rho_A^y),
\end{align}
where $ H(\rho_A^y)$ denotes the von Neumann entropy of the quantum state $\rho_A^y$.

\section{Main results \label{sec:MainResults}}

In this section we state our main results, namely, tight uniform continuity bounds for the $\a$-Arimoto-R\'enyi conditional entropy ($\a$-ARCE) of a classical joint probability distribution, and the conditional $\a$-R\'enyi entropy of a c-q system. In the limit $\a \to 1$, they reduce to the known bounds for the conditional Shannon entropy and the conditional entropy (of a c-q system). The first theorem pertains to the $\a$-ARCE.

\begin{theo} \label{theo:main}
Let $\a \in [0,1)$, $\epsilon \in (0,1-\frac{1}{|\sx|}]$ and $p_{XY}, q_{XY} \in \mathcal{P}_{\sx \times \sy}$ be such that
\begin{equation} \label{eq:theo:main2}
\mathrm{TV}(p_{XY},q_{XY}) \leq \epsilon.
\end{equation}
Then the following inequality holds:
\eqal{ \label{eq:theo:main}
   & |\pr{X}{Y}{p} - \pr{X}{Y}{q}| \\
   &\leq \frac{1}{1-\a} \log \left( \left( 1- \epsilon \right)^{\a} + \left(|\sx|-1\right)^{1-\a} \epsilon^{\a} \right).
}
Moreover, the inequality is tight, \textit{i.e.}, 
\begin{equation}
\sup_{p_{XY},q_{XY}} \frac{|\pr{X}{Y}{p} - \pr{X}{Y}{q}|}{\gamma(\alpha, \epsilon, |\mathcal{X}|)} =1,
\end{equation}
where $\gamma(\alpha, \epsilon, |\mathcal{X}|)$ denotes the expression on the right hand side of (\ref{eq:theo:main}).
\end{theo}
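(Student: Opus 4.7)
The plan is to exploit the marginal Schur concavity of the $\a$-ARCE under the $\sx$-majorization order introduced in Section~\ref{sec:ProofIngredients}, and then reduce the problem to an explicit evaluation on a canonical extremal pair. Without loss of generality I would assume $\pr{X}{Y}{q} \ge \pr{X}{Y}{p}$, so that the task becomes upper bounding the non-negative quantity $\pr{X}{Y}{q} - \pr{X}{Y}{p}$.

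The first step is to establish the key monotonicity property: if $p_{XY}$ is $\sx$-majorized by $p'_{XY}$, in the sense that the two distributions share the same $Y$-marginal and each conditional $p_{X|Y}(\cdot|y)$ is obtained from $p'_{X|Y}(\cdot|y)$ by a doubly stochastic matrix, then $\pr{X}{Y}{p} \ge \pr{X}{Y}{p'}$. This should follow because, for $\a \in (0,1)$, the column functional $v \mapsto \left(\sum_x v_x^\a\right)^{1/\a}$ is Schur concave on non-negative vectors, so each column contribution in~\eqref{eq:ArimotoRenyi2} only grows under column-wise majorization, and the outer factor $\tfrac{\a}{1-\a}\log(\cdot)$ preserves the monotonicity.

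Next, I would use a sequence of $\sx$-majorization moves together with mass-transfer operations that do not increase the TV distance in order to transform an arbitrary pair $(p,q)$ with $\mathrm{TV}(p,q) \le \epsilon$ into a canonical extremal pair $(p^\ast, q^\ast)$. The expected extremal pair is the following: for some fixed $(x_0, y_0) \in \sx \times \sy$, set $p^\ast_{XY}(x_0,y_0) = 1$, $q^\ast_{XY}(x_0,y_0) = 1-\epsilon$, and $q^\ast_{XY}(x,y_0) = \epsilon/(|\sx|-1)$ for $x \ne x_0$. A direct computation then yields $\pr{X}{Y}{p^\ast} = 0$, $\pr{X}{Y}{q^\ast} = \tfrac{1}{1-\a}\log\bigl((1-\epsilon)^\a + (|\sx|-1)^{1-\a}\epsilon^\a\bigr)$, and $\mathrm{TV}(p^\ast, q^\ast) = \epsilon$, simultaneously providing the explicit extremizer that saturates the RHS of~\eqref{eq:theo:main}, hence tightness.

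The hard part will be the reduction step itself: proving that any pair $(p,q)$ can be deformed to $(p^\ast, q^\ast)$ via operations that (i) do not increase the TV distance, (ii) do not decrease $\pr{X}{Y}{q}$, and (iii) do not increase $\pr{X}{Y}{p}$. I expect this to require a preliminary lemma, in the spirit of the one used by Alhejji and Smith for the conditional Shannon entropy, that builds an appropriate chain of column-wise doubly stochastic transformations combined with transfers of mass across $Y$-columns, possibly collapsing the $Y$ alphabet down to a single coordinate. Once this lemma is in place, marginal Schur concavity controls both ARCEs along the chain and the bound reduces to the direct computation on the canonical pair above, explaining in particular why the bound is independent of $|\sy|$.
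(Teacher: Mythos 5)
Your proposal correctly identifies the main structural ingredients (the $\sx$-majorization order, marginal Schur concavity of the $\a$-ARCE, and the canonical single-column extremal pair), and its opening deformation moves are indeed in the spirit of Steps~A and~B of the paper's proof. However, there is a genuine gap in the reduction step, which is also where you yourself flagged the difficulty.

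The gap is the claim that an arbitrary admissible pair $(p,q)$ can be deformed all the way to the canonical pair $(p^\ast, q^\ast)$ supported on a single $Y$-column, through operations that never increase the TV distance, never increase $\pr{X}{Y}{p}$, and never decrease $\pr{X}{Y}{q}$. The obstruction is that $\sx$-majorization (Lemma~\ref{lem:cmajToMaj}) acts \emph{within} $Y$-columns: it requires $p_Y = q_Y$ to be preserved, so it can never change the $Y$-marginal and in particular can never concentrate probability into a single $y_0$. "Collapsing the $Y$-alphabet" therefore requires moving mass across $Y$-columns, and such moves are neither $\sx$-majorization moves nor, a priori, monotone for the $\a$-ARCE (the outer $1/\a$-norm over $Y$ interacts nontrivially with mass transfers across columns). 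Note also that the extremizer is \emph{not} unique: e.g., for $|\sx|=|\sy|=2$, $\alpha<1$, $\epsilon=1/2$, the pair $p_{XY}$ uniform on $\{(1,1),(1,2)\}$ and $q_{XY}$ uniform on $\sx\times\sy$ already saturates the bound while having mass in both columns, so a reduction to \emph{the} canonical single-column pair is not forced and cannot be reached by column-wise majorization.

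What the paper does instead after the $\sx$-majorization deformations is qualitatively different. Rather than continuing to deform all the way to a canonical pair, the proof (Step~C) enlarges the $\sx$-alphabet and ``spreads'' $p$ into a padded matrix $p''_{XY}$, and then (Step~D) bounds $\pr{X}{Y}{p''}$ from above and $\pr{X}{Y}{q'}$ from below \emph{separately}, using the Minkowski inequality (for $1/\a>1$) and the reverse Minkowski inequality (for $\a<1$) respectively. Both bounds are expressed through a common auxiliary matrix $r_{XY}$ and a quantity $\tpp\le \tp$, and the pieces are then combined via the analytic monotonicity Lemmas~\ref{lem:IncDiffLog} and~\ref{lem:BoundMonotonic}. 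It is precisely the Minkowski/reverse-Minkowski step that handles the aggregation over the $Y$-alphabet and is the reason the bound does not depend on $|\sy|$; this is the missing idea in your proposal. To complete a proof along the paper's lines you would need to supply this analytic bounding step in place of your hypothesized reduction lemma.
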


\begin{rem} \label{rem:ShannonBound}
In the limit $\alpha \to 1$, Theorem~\ref{theo:main} yields the corresponding continuity bound for the conditional Shannon entropy which was derived in~\cite{AlhejjiSmith}, and is given by the following:
for any $\epsilon \in (0,1-\frac{1}{|\sx|}]$, for a pair of probability distributions $p_{XY}$ and $q_{XY}$ in $\mathcal{P}_{\sx \times \sy}$ for which $\mathrm{TV}(q_{XY}, p_{XY}) \leq \epsilon$:
\begin{align}
    |H(X|Y)_p - H(X|Y)_q| \leq \epsilon \log \left(|\sx|-1\right) + h(\epsilon),
\end{align}
where $h(\epsilon):= - \epsilon \log \epsilon - (1-\epsilon) \log (1-\epsilon)$ is the binary entropy.
This can be easily verified by a simple use of l'H\^opital's rule.
\end{rem}

Following the ideas used in~\cite{Wilde2020}, and making use of the bound stated in Theorem~\ref{theo:main}, we derive a tight uniform continuity bound for the conditional $\a$-R\'enyi entropy of a c-q system, with the conditioning being on the classical system, and $\a$ being in the range $[0,1)$. This bound is stated in the following theorem.

\begin{theo} \label{theo:main3}
Consider a c-q system $AY$, where $d_A$ denotes the dimension of the quantum system $A$, and $|\sy|$ denotes the dimension of the classical system $Y$. Let $\rho_{AY}$ and $\sigma_{AY}$ be two states of $AY$ satisfying
\begin{equation} \label{eq:theo:main33}
	\frac{1}{2} || \rho_{AY} - \sigma_{AY} ||_1 \leq \epsilon.
\end{equation}
Then for any $\a \in [0,1)$, and $\epsilon \in (0,1-\frac{1}{d_A}]$ the following inequality holds:
\eqal{ \label{eq:theo:mainQ}
  & |\pr{A}{Y}{\rho} - \pr{A}{Y}{\sigma}| \\
  & \leq \frac{1}{1-\a} \log \left( \left( 1- \epsilon \right)^{\a} + \left(d_A-1\right)^{1-\a} \epsilon^{\a} \right).
}
Moreover, the inequality is tight, \textit{i.e.}, 
\begin{equation}
\sup_{\rho_{AY},\sigma_{AY}} \frac{|\pr{A}{Y}{\rho} - \pr{A}{Y}{\sigma}|}{\gamma(\alpha, \epsilon, d_A)} =1,
\end{equation}
where $\gamma(\alpha, \epsilon, d_A)$ denotes the expression on the right hand side of (\ref{eq:theo:mainQ}).
\end{theo}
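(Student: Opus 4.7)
The proof plan is to reduce Theorem~\ref{theo:main3} to its classical counterpart Theorem~\ref{theo:main} by exploiting the simultaneous block-diagonal structure of any c-q state in the classical register $Y$. Writing
\[
\rho_{AY} = \sum_{y\in\sy} r_Y(y)\,\rho_A^y\otimes\proj{y}_Y,\qquad \sigma_{AY} = \sum_{y\in\sy} s_Y(y)\,\sigma_A^y\otimes\proj{y}_Y,
\]
I would first spectrally decompose each conditional state $\rho_A^y$ and $\sigma_A^y$ (in possibly different eigenbases), listing their eigenvalues $r^\downarrow_{X|Y}(\cdot|y)$ and $s^\downarrow_{X|Y}(\cdot|y)$ in non-increasing order on the common index set $\sx:=\{1,\ldots,d_A\}$. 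Setting
\[
p_{XY}(x,y):=r_Y(y)\,r^\downarrow_{X|Y}(x|y),\qquad q_{XY}(x,y):=s_Y(y)\,s^\downarrow_{X|Y}(x|y),
\]
the computation in (\ref{eq:PetzToArimoto}) then yields $\pr{A}{Y}{\rho}=\pr{X}{Y}{p}$ and $\pr{A}{Y}{\sigma}=\pr{X}{Y}{q}$, reducing the quantum continuity question to a classical one.

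Next, I would control $\mathrm{TV}(p_{XY},q_{XY})$ by $\tfrac{1}{2}\|\rho_{AY}-\sigma_{AY}\|_1$. For each fixed $y$, the Hermitian operators $r_Y(y)\rho_A^y$ and $s_Y(y)\sigma_A^y$ have sorted eigenvalue vectors exactly $(p_{XY}(x,y))_{x\in\sx}$ and $(q_{XY}(x,y))_{x\in\sx}$, so the Lidskii--Mirsky--Wielandt inequality gives
\[
\sum_{x\in\sx}|p_{XY}(x,y)-q_{XY}(x,y)|\;\leq\;\|r_Y(y)\rho_A^y-s_Y(y)\sigma_A^y\|_1.
\]
Summing over $y$ and using the block-diagonality of $\rho_{AY}-\sigma_{AY}$ in the $Y$ basis yields $\mathrm{TV}(p_{XY},q_{XY})\leq\tfrac{1}{2}\|\rho_{AY}-\sigma_{AY}\|_1\leq\epsilon$. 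Theorem~\ref{theo:main}, applied with alphabet size $|\sx|=d_A$, then delivers the bound (\ref{eq:theo:mainQ}) at once.

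For tightness, I would take $\rho_{AY}$ and $\sigma_{AY}$ to be purely classical: simultaneously diagonal in a fixed product basis, with joint spectra equal to the pair of distributions that saturates Theorem~\ref{theo:main}. For such a pair the trace distance coincides with the total variation distance of the spectra, and (\ref{eq:PetzToArimoto}) turns both $\pr{A}{Y}{\rho}$ and $\pr{A}{Y}{\sigma}$ into their classical $\a$-ARCEs, so the classical saturator lifts to a quantum one and the supremum equals $1$. The main technical point is the eigenvalue-to-trace-norm step: one must apply Lidskii--Mirsky--Wielandt to Hermitian operators with \emph{different} traces $r_Y(y)\neq s_Y(y)$, and verify that the $y$-by-$y$ summation exactly reproduces $\|\rho_{AY}-\sigma_{AY}\|_1$, so that no slack is introduced and the reduction to Theorem~\ref{theo:main} is lossless.
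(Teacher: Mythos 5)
Your proof is correct, and it takes a genuinely different route from the paper's.

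The paper follows Wilde's strategy: fix the eigenbasis of $\rho_A^y$, apply the conditional dephasing channel $\mathcal{N}_{AY}$ in that basis, use the data-processing inequality for the trace norm to get $\mathrm{TV}(r_{XY},\tilde s_{XY})\le\epsilon$ where $\tilde s$ are the \emph{diagonal entries} of $\sigma_A^y$ in the $\rho$-eigenbasis, and then invoke Schur--Horn together with marginal Schur concavity (Lemma~\ref{lem:CondSchurConcavity}) to argue $\pr{X}{Y}{s}\le\pr{X}{Y}{\tilde s}$ before applying Theorem~\ref{theo:main} to the pair $(r,\tilde s)$. This requires a WLOG ordering of the two entropies because dephasing in the $\rho$-eigenbasis breaks the symmetry between $\rho$ and $\sigma$. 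You instead pass \emph{directly} to the two sorted spectral distributions $p$ and $q$ (using the fact that $\pr{X}{Y}{\cdot}$ is invariant under within-column permutations, so one may sort freely), and control $\mathrm{TV}(p,q)$ by the Mirsky/Lidskii--Wielandt eigenvalue-perturbation inequality applied block-by-block in $y$. This is perfectly valid: Mirsky's inequality holds for Hermitian operators of the same size regardless of trace, and the block-diagonal structure of $\rho_{AY}-\sigma_{AY}$ in the $Y$-register makes the $y$-summation reproduce $\|\rho_{AY}-\sigma_{AY}\|_1$ exactly. Your route dispenses with the dephasing channel, the DPI, the Schur--Horn majorization argument, and the reuse of $\mathcal{X}$-majorization/marginal Schur concavity at the c-q reduction stage, and it is manifestly symmetric in $\rho$ and $\sigma$ (no WLOG needed). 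The trade-off is that it imports a matrix-analysis perturbation result rather than staying within the quantum-information toolkit (channels plus DPI) that the paper uses deliberately to parallel~\cite{Wilde2020}. Both reductions are lossless and both tightness constructions coincide (commuting states with spectra given by~\eqref{eq:qtight} and~\eqref{eq:ptight}).
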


\begin{rem}
By making use of Remark~\ref{rem:ShannonBound}, in the limit $\alpha \to 1$ we recover the corresponding continuity bound for the conditional entropy of a c-q state which was derived in~\cite{Wilde2020},
and is given by the following:
for any $\epsilon \in (0,1-\frac{1}{d_A}]$, for a pair of finite-dimensional c-q states $\rho_{AY}, \sigma_{AY}$ satisfying \eqref{eq:theo:main33}, the following inequality holds:
\begin{align}
    |H(A|Y)_{\rho} - H(A|Y)_{\sigma}| \leq \epsilon \log \left(d_A-1\right) + h(\epsilon),
\end{align}
where $h(\epsilon)$ is the binary entropy and $H(A|Y)_{\rho}$ denotes the conditional entropy of the c-q state $\rho_{AY}$.
\end{rem}

\section{Proof of the main results}\label{sec:proofs}

\subsection{Proof ingredients: \texorpdfstring{$\mathcal{X}$}{}-majorization and marginally Schur-concave functions \label{sec:ProofIngredients}}

The notion of conditional majorization was introduced in~\cite{CondMaj}. For completeness, we state its definition in Appendix \ref{CondMajGour}. Here we consider a particular case of conditional majorization, which we refer to as $\mathcal{X}$-\textit{majorization}, as it can be understood as majorization applied to the $\mathcal{X}$-marginals of joint probability distributions $p_{XY} \in \mathcal{P}_{\sx \times \sy}$. The definition is as follows.
\begin{defi}
Denote by $\mathbb{R}^{n \times l}_+$ the set of all $n \times l$ matrices with non-negative values. Consider $P \in \mathbb{R}^{n \times l}_+$ and $Q \in \mathbb{R}^{n \times l}_+$. We say $Q$ is $\mathcal{X}$-majorized by $P$ and write $Q \cmajj P$ if there exist matrices $D^{(j)}$ and $R^{(j)}$, where $j \in \left\lbrace 1, 2, \cdots, l \right\rbrace$, such that
\begin{equation}
	Q = \sum_{j=1}^l D^{(j)} P R^{(j)},
\end{equation}
where each $D^{(j)}$ is an $n \times n$ doubly-stochastic matrix and each $R^{(j)}$ is an $l \times l$ matrix which is such that $\left(R^{(j)}\right)_{ik} = \delta_{ik} \delta_{ij}, \forall i,k \in \left\lbrace 1, 2, \cdots, l \right\rbrace$.
\end{defi}
Henceforth, we choose to represent a joint probability distribution $p_{XY} \in \mathcal{P}_{\sx \times \sy}$ as a matrix of dimensions $|\sx| \times |\sy|$, so that its $y$\textsuperscript{th} column contains the column-vector $\left\lbrace p_{XY}(x,y) \right\rbrace_{x \in \sx}$, for each $y \in \sy$, while its $x$\textsuperscript{th} row contains the row-vector $\left\lbrace p_{XY}(x,y) \right\rbrace_{y \in \sy}$, for each $x \in \sx$. Here we have used the notation $\left\lbrace p_{XY}(x,y) \right\rbrace_{x \in \sx}:=  \left( p_{XY}(1,y), p_{XY}(2,y), \ldots, p_{XY}(|{\mathcal{X}}|,y) \right)$
and $\left\lbrace p_{XY}(x,y) \right\rbrace_{y \in \sy}:=  \left( p_{XY}(x,1), p_{XY}(x,2), \ldots, p_{XY}(x,|{\mathcal{Y}}|) \right)$. We also denote the column-vector of conditional probabilities $p_{X|Y}(x|y)$ for a fixed $y \in \sy$ as
\begin{equation} \label{eq:vecCond}
 	p_{X|Y=y} = \left( p_{X|Y}(1|y), p_{X|Y}(2|y), \ldots, p_{X|Y}(|{\mathcal{X}}||y) \right).
\end{equation}
Moreover, we use the symbols $q_{XY}$ and $p_{XY}$ to refer to both the joint distributions as well as their matrix representations.

We now apply our definition of $\mathcal{X}$-majorization to joint probability distributions in $\mathcal{P}_{\sx \times \sy}$ and relate it to the ``usual" majorization applied to vectors of conditional probabilities. We do so through the following simple lemma.
\begin{lem} \label{lem:cmajToMaj}
Consider $p_{XY}, q_{XY} \in \mathcal{P}_{\sx \times \sy}$. Then the following statements are equivalent:
\begin{enumerate}
	\item $q_{XY} \cmajj p_{XY}$,
	\item $\left( q_{XY} R^{(y)} \bos{e}_{|\sy|} \right) \prec \left( p_{XY} R^{(y)} \bos{e}_{|\sy|} \right)$ for all $y \in \sy$, \label{enum:majCol:lem:cmajToMaj}
	\item $q_{Y} = p_Y$ and $q_{X|Y=y} \prec p_{X|Y=y}$ for all $y \in \sy$.
\end{enumerate}
where $\bos{e}_{|\sy|}$ is a column-vector with $|\sy|$ rows with all its elements equal to $1$.
\end{lem}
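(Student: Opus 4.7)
The plan is to prove the chain of equivalences (1) $\Leftrightarrow$ (2) $\Leftrightarrow$ (3) by unpacking the matrix formulation and invoking the Hardy--Littlewood--P\'olya characterization of majorization stated immediately before the lemma.

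First I would isolate the following elementary observation about the matrices $R^{(j)}$ appearing in the definition of $\sx$-majorization: since $(R^{(j)})_{ik} = \delta_{ik}\delta_{ij}$, the matrix $R^{(j)}$ has a single nonzero entry, a $1$ in position $(j,j)$. Hence for any $P \in \mathbb{R}^{|\sx| \times |\sy|}_+$, the product $P R^{(j)}$ is the matrix whose $j$-th column equals the $j$-th column of $P$ and whose remaining columns vanish, and $P R^{(j)} \bos{e}_{|\sy|}$ is precisely the $j$-th column of $P$ viewed as a vector in $\mathbb{R}^{|\sx|}$.

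For (1) $\Leftrightarrow$ (2), I would read the defining equation $q_{XY} = \sum_{j=1}^{|\sy|} D^{(j)} p_{XY} R^{(j)}$ column-by-column: by the observation above, the $y$-th column of the right-hand side is $D^{(y)}$ applied to the $y$-th column of $p_{XY}$, and all other contributions vanish. So (1) is equivalent to requiring that for each $y \in \sy$ the $y$-th column of $q_{XY}$ be obtained from the $y$-th column of $p_{XY}$ by multiplication by some doubly-stochastic matrix $D^{(y)}$. By Hardy--Littlewood--P\'olya (the existence of such a doubly-stochastic matrix is equivalent to majorization), this is precisely statement (2).

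For (2) $\Leftrightarrow$ (3), I would use two standard facts about majorization. First, any majorization $\bos{u} \prec \bos{v}$ entails $\sum_i u_i = \sum_i v_i$; applied to the $y$-th columns whose entries sum to $q_Y(y)$ and $p_Y(y)$ respectively, this forces $q_Y = p_Y$. Second, given equal marginals, for each $y$ with $p_Y(y) > 0$ the $y$-th columns of $p_{XY}$ and $q_{XY}$ are exactly $p_Y(y) \cdot p_{X|Y=y}$ and $p_Y(y) \cdot q_{X|Y=y}$ respectively, and since majorization is invariant under multiplication by a common positive scalar, the column-majorization of (2) is equivalent to $q_{X|Y=y} \prec p_{X|Y=y}$. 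Values of $y$ with $p_Y(y) = q_Y(y) = 0$ are handled vacuously under the usual convention that the conditional distributions may be defined arbitrarily there. There is no real obstacle: the only care needed is in matching the matrix-column picture with the probability-distribution picture and in the zero-marginal edge case; the substance of the argument is the structure of $R^{(j)}$ together with a single appeal to Hardy--Littlewood--P\'olya.
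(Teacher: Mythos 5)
Your proposal is correct and follows essentially the same route as the paper's proof: column-by-column unpacking of the defining sum using the structure of $R^{(j)}$, followed by the Hardy--Littlewood--P\'olya characterization for (1) $\Leftrightarrow$ (2), and the equal-sums property of majorization to force $q_Y = p_Y$ for (2) $\Leftrightarrow$ (3). Your treatment is marginally more explicit about the action of $R^{(j)}$ and about the zero-marginal edge case, but the substance and structure of the argument are the same.
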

Note that point \ref{enum:majCol:lem:cmajToMaj} simply means that each column of $q_{XY}$ is majorized by the corresponding column (same $y$ index) of $p_{XY}$.
\begin{proof}
We first show the equivalence between \textit{1} and \textit{2}. Consider $q_{XY} \cmajj p_{XY}$. In that case, there exist matrices $D^{(z)}$ and $R^{(z)}$, where $z \in \sy$, such that
\begin{equation}
	q_{XY} = \sum_{z=1}^{|\sy|} D^{(z)} p_{XY} R^{(z)},
\end{equation}
where each $D^{(z)}$ is an $|\sx| \times |\sx|$ doubly-stochastic matrix and each $R^{(z)}$ is a $|\sy| \times |\sy|$ matrix which is such that $\left(R^{(z)}\right)_{ab} = \delta_{ab} \delta_{az}, \forall a,b \in \sy$. Using the above equation, for each $y \in \sy$,
\eqal{ \label{eq:eq5}
  q_{XY} R^{(y)} \bos{e}_{|\sy|} & = \sum_{z=1}^{|\sy|} D^{(z)} p_{XY} R^{(z)} R^{(y)} \bos{e}_{|\sy|} \\
  & = D^{(y)} p_{XY} R^{(y)} \bos{e}_{|\sy|},
}
so that~\cite{HardyLittlewoodPolya}
\begin{equation}
	\left( q_{XY} R^{(y)} \bos{e}_{|\sy|} \right) \prec \left( p_{XY} R^{(y)} \bos{e}_{|\sy|} \right), \quad \forall y \in \sy.
\end{equation}
Since the doubly-stochastic matrix $D^{(y)}$ appearing in the last equation of \eqref{eq:eq5} depends on $y$, the proof can be reversed without any loss of generality, so that \textit{1} and \textit{2} are equivalent. We now show the equivalence between \textit{2} and \textit{3}.
Since $p_{XY}(x,y) = p_Y(y) p_{X|Y}(x|y), \forall x \in \sx, \forall y \in \sy$, and similarly for $q_{XY}$, according to the definition in \eqref{eq:vecCond}, we can write, for each $y \in \sy$,
\eqal{
  & p_{Y}(y) p_{X|Y=y} = p_{XY} R^{(y)} \bos{e}_{|\sy|}, \\
  & q_Y(y) q_{X|Y=y} = q_{XY} R^{(y)} \bos{e}_{|\sy|},
}
so that
\eqal{
  & p_{Y}(y) = \sum_{x \in \sx} \left( p_{XY} R^{(y)} \bos{e}_{|\sy|}\right)_x, \qquad \forall y \in \sy, \\
  & q_Y(y) = \sum_{x \in \sx} \left( q_{XY} R^{(y)} \bos{e}_{|\sy|}\right)_x, \qquad \forall y \in \sy.
}
Suppose $\left( q_{XY} R^{(y)} \bos{e}_{|\sy|} \right) \prec \left( p_{XY} R^{(y)} \bos{e}_{|\sy|} \right)$ for all $y \in \sy$. A necessary condition is that $q_Y(y) = p_{Y}(y)$ for all $y \in \sy$, which means that $p_{Y} = q_Y$. It also means that
\begin{equation}
	q_{X|Y=y} \prec p_{X|Y=y}, \quad \forall y \in \sy.
\end{equation}
Again, the proof can be reversed without any loss of generality, so that \textit{2} and \textit{3} are equivalent.
\end{proof}

\noindent We are now in position to define the notion of marginally Schur-concave function.
\begin{defi}
A function $f : \mathbb{R}^{n \times l}_+ \rightarrow \mathbb{R}$ will be called {\em{marginally Schur-concave}} if, for any pair of matrices $P, Q \in \mathbb{R}^{n \times l}_+$, we have
\begin{equation}
	Q \cmajj P \quad \Rightarrow \quad f(P) \leq f(Q).
\end{equation}
The function $f$ is said to be {\em{marginally Schur-convex}} if the opposite inequality holds.
\end{defi}

We show the following.
\begin{lem} \label{lem:CondSchurConcavity}
The $\a$-ARCE is marginally Schur-concave for all $\a \in [0,1) \cup (1,\infty)$.
\end{lem}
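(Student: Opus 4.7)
The plan is to reduce $\sx$-majorization to ordinary (vector) majorization on each column via Lemma \ref{lem:cmajToMaj}, and then exploit the standard Schur properties of the power-sum $v \mapsto \sum_i v_i^\a$ on the conditional distributions column by column.

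First, I would factor $p_{XY}(x,y) = p_Y(y)\,p_{X|Y}(x|y)$, so that $\sum_{x} p_{XY}^{\a}(x,y) = p_Y(y)^{\a} \sum_{x} p_{X|Y}(x|y)^{\a}$ and consequently
\begin{equation*}
\left[ \sum_{x \in \sx} p_{XY}^{\a}(x,y) \right]^{1/\a} = p_Y(y) \left[ \sum_{x \in \sx} p_{X|Y}(x|y)^{\a} \right]^{1/\a}.
\end{equation*}
Inserting this into \eqref{eq:ArimotoRenyi2} rewrites the ARCE as a $p_Y$-weighted average of column-wise power-sums, raised to $1/\a$, all passed through $\log$ and scaled by $\a/(1-\a)$. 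Now suppose $q_{XY} \cmajj p_{XY}$. By Lemma \ref{lem:cmajToMaj} this is equivalent to $q_Y = p_Y$ together with $q_{X|Y=y} \prec p_{X|Y=y}$ for every $y \in \sy$.

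Next I would use the fact that the symmetric function $v \mapsto \sum_i v_i^{\a}$ is Schur-concave on $\mathbb{R}^n_+$ when $\a \in [0,1)$ and Schur-convex when $\a > 1$ (a direct consequence of the concavity, resp.\ convexity, of $t \mapsto t^{\a}$ and the fact that any sum of symmetric concave/convex functions of the coordinates is Schur-concave/convex). Applying this column by column to $q_{X|Y=y} \prec p_{X|Y=y}$ gives, for each $y$,
\begin{equation*}
\sum_{x} q_{X|Y}(x|y)^{\a} \geq \sum_{x} p_{X|Y}(x|y)^{\a} \quad (\a \in [0,1)),
\end{equation*}
with the inequality reversed for $\a > 1$. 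Since $t \mapsto t^{1/\a}$ is strictly increasing for $\a > 0$, the same inequality (with the same direction) holds after the outer $1/\a$ power. Summing over $y$ with the common non-negative weights $p_Y(y) = q_Y(y)$ and applying $\log$ (monotone) preserves the direction in both cases.

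Finally I would track the sign of the prefactor $\a/(1-\a)$: it is positive for $\a \in [0,1)$ and negative for $\a > 1$, so in either regime the inequality on the argument of $\log$ combines with the prefactor to produce $\pr{X}{Y}{q} \geq \pr{X}{Y}{p}$, which is precisely marginal Schur concavity. The only real subtlety is the bookkeeping of signs when $\a > 1$, where the direction of the power-sum inequality and the sign of $\a/(1-\a)$ both flip and cancel; no deeper idea is needed beyond combining Lemma \ref{lem:cmajToMaj} with the textbook Schur behaviour of the power-sum, and the boundary value $\a = 0$ can be handled by continuity from $\a \in (0,1)$.
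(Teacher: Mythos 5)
Your proposal is correct and follows essentially the same route as the paper: rewrite the ARCE by factoring $p_{XY}(x,y) = p_Y(y)p_{X|Y}(x|y)$, reduce $\sx$-majorization to column-wise ordinary majorization via Lemma~\ref{lem:cmajToMaj}, invoke Schur-concavity (resp.\ Schur-convexity) of the power-sum for $\a\in[0,1)$ (resp.\ $\a>1$), and track the sign of $\a/(1-\a)$. The paper packages the power-sum step as the Schur-concavity/convexity of the $\a$-(quasi-)norm $\|\cdot\|_\a$, which is the same computation in different notation.
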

\begin{proof}
The $\a$-ARCE can be written as
\begin{equation} \label{eq:ArimotoRenyi3}
	\pr{X}{Y}{p} = \frac{\a}{1-\a} \log \left( \sum_{y \in \sy} p_Y(y) ||p_{X|Y=y}||_\a \right).
\end{equation}
We need to show that $q_{XY} \cmajj p_{XY}$ implies $\pr{X}{Y}{p} \leq \pr{X}{Y}{q}$. From Lemma \ref{lem:cmajToMaj}, we have that $q_{XY} \cmajj p_{XY}$ implies $p_{Y} = q_Y$ and $q_{X|Y=y} \prec p_{X|Y=y}$, for all $y \in \sy$. We distinguish between the two cases $\a>1$ and $\a \in [0,1)$.
\paragraph{Case 1: $\a>1$.}
In this case, the function $u \mapsto u^\a$ is convex for $u \in [0, \infty)$, so that the $\a$-norm $||p_{X|Y=y}||_\a$ is Schur-convex. Since $q_{X|Y=y} \prec p_{X|Y=y}$ for all $y \in \sy$, this implies that
\begin{equation}
	||p_{X|Y=y}||_\a \geq ||q_{X|Y=y}||_\a, \quad \forall y \in \sy.
\end{equation}
Since $p_{Y} = q_Y$ and $1-\a<0$, we end up with
\begin{equation}
	\pr{X}{Y}{p} \leq \pr{X}{Y}{q}, \quad \forall \a > 1.
\end{equation}
\paragraph{Case 2: $\a \in [0,1)$.}
In this case, the function $u \mapsto u^\a$ is concave for $u \in [0, \infty)$, so that the $\a$-quasi-norm $||p_{X|Y=y}||_\a$ is Schur-concave. Since $q_{X|Y=y} \prec p_{X|Y=y}$ for all $y \in \sy$, this implies that
\begin{equation}
	||p_{X|Y=y}||_\a \leq ||q_{X|Y=y}||_\a, \quad \forall y \in \sy.
\end{equation}
Since $p_{Y} = q_Y$ and $1-\a>0$, we end up with
\begin{equation}
	\pr{X}{Y}{p} \leq \pr{X}{Y}{q}, \quad \forall \a \in [0,1).
\end{equation}
\end{proof}

\noindent Note that Lemma \ref{lem:CondSchurConcavity} can also be proven by exploiting the fact that $\mathcal{X}$-majorization is a special case of conditional majorization, and by making use of Theorem 1 of~\cite{CondMaj}, which we state in the Appendix \ref{CondMajGour} as Lemma \ref{lem:CharacCondMajGour}.
\begin{rem}
Using the same ideas as in the proof of Lemma \ref{lem:CondSchurConcavity}, the conditional Shannon entropy can be readily verified to be marginally Schur-concave (see also~\cite{AlhejjiSmith}).
\end{rem}

\subsection{Proof of Theorem~\ref{theo:main}}

Fix $\epsilon \in [0,1-\frac{1}{|\sx|})$.
Consider two probability distributions $p_{XY}$ and $q_{XY}$ such that their total variation distance $\tp \coloneqq \mathrm{TV}(p_{XY},q_{XY})$ is at most equal to $\epsilon$. We assume, without loss of generality, that $\pr{X}{Y}{p} \geq \pr{X}{Y}{q}$ and define $\Delta H_\a \coloneqq \pr{X}{Y}{p} - \pr{X}{Y}{q}$.

The proof of Theorem~\ref{theo:main} consists of a series of steps which are described in detail below. The key idea is to alter the distributions $q_{XY}$ and $p_{XY}$, in a series of iterative steps such that the difference $\pr{X}{Y}{p}-\pr{X}{Y}{q}$ never decreases, and the total variation distance also remains unchanged. As a result of these manipulations, $\pr{X}{Y}{q}$ reduces to zero while $\pr{X}{Y}{p}$ attains its maximal value under the constraint on the total variation distance, which indeed is the desired upper bound. Note that our initial steps (Step A and Step B) are similar to those used by Alhejji and Smith~\cite{AlhejjiSmith}. However, the remaining steps deviate considerably from theirs in order to deal with the additional complexity of the $\a$-ARCE (compared to the conditional Shannon entropy). In order to clarify some of the steps of the proof, we consider a $3 \times 3$ example in Appendix~\ref{exampleARCEproof}.
\medskip

\noindent
{\bf{Step A: Reordering}} Recall that we chose to represent the probability distributions $p_{XY}$ and $q_{XY}$ by $|\sx|\times |\sy|$ matrices. For convenience, we first arrange the $|\sy|$ columns of $q_{XY}$ such that 
\begin{equation}
 q_Y(y) \geq q_Y(y+1), \quad \forall y \in \sy \setminus \left\lbrace |\sy| \right\rbrace.
\end{equation}
We then do the same permutation of columns of $p_{XY}$. Note, however, that $p_Y(y)$ is not necessarily greater that $p_Y(y+1)$.
From the form given in \eqref{eq:ArimotoRenyi3}, one understands that this does not change the value of the $\a$-ARCE. Now, for the $y$\textsuperscript{th} column, define the two sets
\eqal{ \label{eq:eq6}
	I_y & = \left\lbrace x | q_{XY}(x,y) \geq p_{XY}(x,y) \right\rbrace, \\
	I_y^c & = \left\lbrace x | q_{XY}(x,y) < p_{XY}(x,y) \right\rbrace.
}
Do the following changes for both $p_{XY}$ and $q_{XY}$. Within the $y$\textsuperscript{th} column, if both $I_y$ and $I_y^c$ are non-empty, put the elements corresponding to indices in $I_y$ ahead of those corresponding to indices in $I_y^c$. Finally, in both $p_{XY}$ and $q_{XY}$, order the elements corresponding to indices in $I_y$ such that
\begin{equation}
	q_{XY}(x+1,y) \leq q_{XY}(x,y), \quad \forall x \in I_y \setminus \left\lbrace |I_y| \right\rbrace,
\end{equation}
and do the same for the elements corresponding to indices in $I_y^c$, so that we also have
\begin{equation}
	q_{XY}(x+1,y) \leq q_{XY}(x,y), \quad \forall x \in I_y^c \setminus \left\lbrace |I_y^c| \right\rbrace.
\end{equation}
Note that it \textit{does not} necessarily mean that
\begin{equation}
	p_{XY}(x+1,y) \leq p_{XY}(x,y), \quad \forall x \in I_y \setminus \left\lbrace |I_y| \right\rbrace,
\end{equation}
or,
\begin{equation}
	p_{XY}(x+1,y) \leq p_{XY}(x,y), \quad \forall x \in I_y^c \setminus \left\lbrace |I_y^c| \right\rbrace.
\end{equation}
These operations correspond to permutations of elements within fixed columns of $p_{XY}$ and $q_{XY}$, in the sense that no element is transferred from one column to a different column. For an illustration, see Figure~\ref{exampleARCEproofA} of Appendix~\ref{exampleARCEproof}.

As a consequence, neither $\pr{X}{Y}{p}$ nor $\pr{X}{Y}{q}$ changes under these operations. Furthermore, the total variation distance between the probability distributions does not change either, since the exact same permutations have been performed in both $p_{XY}$ and $q_{XY}$.
\medskip

\noindent
{\bf{Step B: Walking}} We start by moving probability weights in $q_{XY}$. Specifically, we will make $q_{XY}$ less disordered in the sense of $\mathcal{X}$-majorization, by ``concentrating" probability weights, which will decrease the value of $\pr{X}{Y}{q}$ in the process. For all $y \in \sy$ such that $I_y$ is not empty, define the column vectors $\bos{v}^{(0)}(y):= \left\lbrace q_{XY}(x,y)\right\rbrace_{x \in \sx}$ and $\bos{v}^{(i)}(y)\equiv \left\lbrace v_x^{(i)}(y) \right\rbrace_{x \in \sx}$ for $i = 1, \cdots, |I_y|-1$ such that
\eqal{ \label{eq:eq4}
	& \bos{v}_1^{(i)}(y) = \bos{v}_1^{(i-1)}(y) + \left[ q_{XY}(i+1,y) - p_{XY}(i+1,y) \right], \\
	& \bos{v}_{i+1}^{(i)}(y) = p_{XY}(i+1,y), \\
	& \bos{v}_x^{(i)}(y) = \bos{v}_x^{(i-1)}(y), \quad \forall x \in \sx \setminus \left\lbrace 1, i+1 \right\rbrace.
}
Similarly, for all $y \in \sy$ such that $I_y$ is empty, define the column vectors $\bos{u}^{(0)}(y):= \left\lbrace 
q_{XY}(x,y)\right\rbrace_{x \in {\mathcal{X}}}$ and $\bos{u}^{(i)}(y):= \left\lbrace u_x^{(i)}(y) \right\rbrace_{x \in \sx}$ for $i = 1, \cdots, |\sx|-1$ such that
\eqal{ \label{eq:eq4Comp}
	& \bos{u}_1^{(i)}(y) = \bos{u}_1^{(i-1)}(y) + \delta_i, \\
	& \bos{u}_{i+1}^{(i)}(y) = \bos{u}_{i+1}^{(i-1)}(y) - \delta_i, \\
	& \bos{u}_x^{(i)}(y) = \bos{u}_x^{(i-1)}(y), \quad \forall x \in \sx \setminus \left\lbrace 1, i+1 \right\rbrace.
}
where $\delta_i = \min \left\lbrace \bos{u}_{i+1}^{(i-1)}(y), p_{XY}(1,y)-\bos{u}_{1}^{(i-1)}(y) \right\rbrace$. The above equations model the following moves in the matrix representing $q_{XY}$: in columns for which $I_y$ is empty, at each step $i = 1, \cdots, |\sx|-1$, we remove a probability weight $\delta_i$ from $q_{XY}(i+1,y)$ an add it into $q_{XY}(1,y)$. We do so as long as the inequality $q_{XY}(1,y) \leq p_{XY}(1,y)$ is not violated. This is in order to keep the total variation distance unchanged. Indeed, note that the equations imply that we stop adding weight into $q_{XY}(1,y)$ either when we reach $q_{XY}(1,y) = p_{XY}(1,y)$, or when $q_{XY}(i,y) = 0$ for all $i = 2, \cdots, |\sx|$. For an illustration, see Figure~\ref{exampleARCEproofB} of Appendix~\ref{exampleARCEproof}.

From the first identity in \eqref{eq:eq6} and Lemmas \ref{lem:majEps} and \ref{lem:MajPerp},
\begin{equation}
	\bos{v}^{(i)}(y) \prec \bos{v}^{(i+1)}(y),
\end{equation}
at each step $i = 1, \cdots, |I_y|-1$, for all $y \in \sy$, so that
\begin{equation}
	\bos{v}^{(0)}(y) \prec \bos{v}^{(|I_y|-1)}(y),
\end{equation}
for all $y \in \sy$.

Similarly, we have that
\begin{equation}
	\bos{u}^{(0)}(y) \prec \bos{u}^{(|\sx|-1)}(y).
\end{equation}
Define the matrix $\tilde{q}_{XY} \in \mathcal{P}_{\sx \times \sy}$ whose columns are given by the final column-vectors $\bos{v}^{(|I_y|-1)}(y)$ when $y$ is such that $I_y$ is non-empty, and the column-vectors $\bos{u}^{(|\sx|-1)}(y)$ else, \textit{i.e.},
\begin{equation}
    \tilde{q}_{XY}(x,y) = \bos{v}_x^{(|I_y|-1)}(y),  \quad \forall x \in \sx,
\end{equation}
for all $y$ such that $I_y$ is non-empty, and
\begin{equation}
    \tilde{q}_{XY}(x,y) = \bos{u}^{(|\sx|-1)}(y),  \quad \forall x \in \sx,
\end{equation}
for all $y$ such that $I_y$ is empty. We therefore have that the columns of $q_{XY}$ are majorized by the corresponding columns of $\tilde{q}_{XY}$, \textit{i.e.},
\begin{equation}
	\left( q_{XY} R^{(y)} \bos{e}_{|\sy|} \right) \prec \left( \tilde{q}_{XY} R^{(y)} \bos{e}_{|\sy|} \right), \quad \forall y \in \sy.
\end{equation}
According to Lemma \ref{lem:cmajToMaj}, it means that $q_{XY} \cmajj \tilde{q}_{XY}$, which, using Lemma \ref{lem:CondSchurConcavity}, implies that
\begin{equation}
\pr{X}{Y}{\tilde{q}} \leq \pr{X}{Y}{q}.
\end{equation}
In other words, $\pr{X}{Y}{q}$ does not increase under the above operations. For the sake of clarity, we relabel $\tilde{q}_{XY}$ as $q_{XY}$ at this point. We therefore now have
\begin{equation}
	\Delta H_\a \leq \pr{X}{Y}{p}-\pr{X}{Y}{q}.
\end{equation}
Define the two sets
\eqal{
	J & = \left\lbrace y | q_{XY}(1,y) \geq p_{XY}(1,y) \right\rbrace, \\
	J^c & = \left\lbrace y | q_{XY}(1,y) < p_{XY}(1,y) \right\rbrace,
}
(where $J^c$ might be empty) and put the columns of $J$ ahead of the columns of $J^c$ in both $p_{XY}$ and $q_{XY}$. After all these replacements, our joint probability matrices satisfy the following relations:
\eqal{ \label{eq:condStepB.1.3}
	q_{XY}(1,y) - p_{XY}(1,y) \geq 0, & \\
	q_{XY}(x,y) - p_{XY}(x,y) \leq 0, & \quad \forall \ x \in \sxm,
}
for all $y \in J$ and
\eqal{ \label{eq:condStepB.1.4}
	q_{XY}(1,y) - p_{XY}(1,y) < 0, & \\
	q_{XY}(x,y) = 0, & \quad \forall \ x \in \sxm.
}
for all $y \in J^c$. Note that this implies
\begin{equation} \label{eq:EpsStepB.1}
	\sum_{y \in J} \left( q_{XY}(1,y) - p_{XY}(1,y) \right) = \tp \equiv {\mathrm{TV}}(p_{XY}, q_{XY}).
\end{equation}
\medskip

\noindent
{\bf{Step C: Enlarging}} We will now apply a complementary operation to $p_{XY}$, this time making the probabilities more ``spread out" within each column, making the probability distribution more disordered in the sense of $\mathcal{X}$-majorization, which will now increase the value of $\pr{X}{Y}{p}$. In order to do this, we increase the dimensions of (the matrix representing) $p_{XY}$. The extra dimensions which are added will only be exploited in the context of the proof, and will not influence the result. Note that since we increase the dimension of $p_{XY}$, we also do the same for $q_{XY}$ by just appending zeros in the extra dimensions. We do this in order to always compare matrices of similar dimensions. Denote by $\mathcal{P}'$ the space of matrices of dimensions $(2|\sx|-1) \times |\sy|$, and define $q'_{XY}$ and $p'_{XY}$ in $\mathcal{P}'$ such that
\begin{equation} \label{eq:condStepB.2.2}
  q'_{XY}(x,y) = \begin{cases}
    q_{XY}(x,y), & \forall \ x = 1, \cdots, |\sx|, \\
    0, & \forall \ x = |\sx| + 1, \cdots, 2|\sx|-1,
  \end{cases}
\end{equation}
for all $y \in \sy$, and similarly,
\begin{equation} \label{eq:condStepB.2.3}
  p'_{XY}(x,y) = \begin{cases}
    p_{XY}(x,y), & \forall \ x = 1, \cdots, |\sx|, \\
    0, & \forall \ x = |\sx| + 1, \cdots, 2|\sx|-1,
  \end{cases}
\end{equation}
for all $y \in \sy$. At this point, $\Delta H_\a$ is upper-bounded by $\pr{X}{Y}{p'}-\pr{X}{Y}{q'}$. Next, we make $p'_{XY}$ more disordered, therefore increasing the value of $\pr{X}{Y}{p'}$, without changing $q'_{XY}$ and without changing the total variation distance between the two matrices. Construct the matrix $p''_{XY}$ in $\mathcal{P}'$ such that
\begin{equation} \label{eq:condStepB.2.1.1}
  p''_{XY}(1,y) = p'_{XY}(1,y),
\end{equation}
\begin{equation} \label{eq:condStepB.2.1.2}
  p''_{XY}(x,y) = q'_{XY}(x,y),
\end{equation}
for all $x = 2, \cdots, |\sx|$ and
\begin{equation} \label{eq:condStepB.2.1.3}
  p''_{XY}(x,y) = p'_{XY}(x+1-|\sx|,y) - q'_{XY}(x+1-|\sx|,y),
\end{equation}
for all $x = |\sx| + 1, \cdots, 2|\sx|-1$, for all $y \in \sy$. For an illustration, see Figure~\ref{exampleARCEproofC} of Appendix~\ref{exampleARCEproof}.

Using Lemmas \ref{lem:MajPerp} and \ref{lem:majEpsComp}, we have that each column of $p''_{XY}$ is majorized by the corresponding column of $p'_{XY}$. In other words, using the notations of $\mathcal{X}$-majorization, 
\begin{equation}
	\left( p''_{XY} R^{(y)} \bos{e}_{|\sy|} \right) \prec \left( p'_{XY} R^{(y)} \bos{e}_{|\sy|} \right), \quad \forall \ y \in \sy,
\end{equation}
so that according to Lemma \ref{lem:cmajToMaj}, $p''_{XY} \cmajj p'_{XY}$, and using Lemma \ref{lem:CondSchurConcavity}, $\pr{X}{Y}{p'} \leq \pr{X}{Y}{p''}$. We have at this point
\begin{equation} \label{eq:DeltaHa}
	\Delta H_\a \leq \pr{X}{Y}{p''}-\pr{X}{Y}{q'}.
\end{equation}
Furthermore, we have
\begin{equation} \label{eq:epsPrime}
	\tpp \coloneqq \sum_{y=1}^{|\sy|} \sum_{x=|\sx|+1}^{2|\sx|-1} p''_{XY}(x,y) \leq \tp.
\end{equation}
In fact, $\tpp = \tp$ if $J^c$ is empty. Equivalently,
\begin{equation} \label{eq:epsPrime2}
	\tpp= \tp - \sum_{y \in J^c} (p'_{XY}(1,y)-q'_{XY}(1,y)).
\end{equation}
\medskip


\noindent
{\bf{Step D: Bounding the terms individually}}
\smallskip

{\bf{Step D.1: Upper bounding $\pr{X}{Y}{p''}$}} Now, we focus on $p''_{XY}$. We have
\eqaln{
  \pr{X}{Y}{p''} & = \frac{\a}{1-\a} \log \left( \sum_{y \in \sy} \left[ \sum_{x=1}^{2|\sx|-1} (p''_{XY}(x,y))^{\a} \right]^{1/\a} \right) \\
	& = \frac{1}{1-\a} \log \left( \left\lVert \sum_{x=1}^{2|\sx|-1} (p''_{XY}(x, \cdot))^{\a} \right\rVert_{1/\a} \right) \\
  & = \frac{1}{1-\a} \log \left( \left\lVert \sum_{x=1}^{|\sx|} (p''_{XY}(x, \cdot))^{\a} \right. \right. \\
  & \hspace{2.2cm} \left. \left. + \sum_{x=|\sx|+1}^{2|\sx|-1} (p''_{XY}(x, \cdot))^{\a} \right\rVert_{1/\a} \right) \\
	& \leq \frac{1}{1-\a} \log \left( \left\lVert \sum_{x=1}^{|\sx|} (p''_{XY}(x, \cdot))^{\a} \right\rVert_{1/\a} \right. \\
  & \hspace{2.2cm} \left. + \left\lVert \sum_{x=|\sx|+1}^{2|\sx|-1} (p''_{XY}(x, \cdot))^{\a} \right\rVert_{1/\a}\right) \\
	& \leq \frac{1}{1-\a} \log \left( \left\lVert \sum_{x=1}^{|\sx|} (p''_{XY}(x, \cdot))^{\a} \right\rVert_{1/\a} \right. \\
  & \hspace{2.2cm} \left. + \sum_{x=|\sx|+1}^{2|\sx|-1} \left\lVert (p''_{XY}(x, \cdot))^{\a} \right\rVert_{1/\a}\right)
}
where the inequalities follow from the Minkowski inequality for $1/\a > 1$, and the monotonicity of the logarithm. The second term inside the $\log$ on the right-hand side of the last inequality is
\begin{equation*}
	\sum_{x=|\sx|+1}^{2|\sx|-1} \left\lVert (p''_{XY}(x, \cdot))^{\a} \right\rVert_{1/\a} = \sum_{x=|\sx|+1}^{2|\sx|-1} \left( \sum_{y \in \sy} p''_{XY}(x,y) \right)^{\a}.
\end{equation*}
The elements of $p''_{XY}$ involved in the above equation satisfy \eqref{eq:epsPrime}, so that
\eqaln{
  & \left( \frac{\tpp}{|\sx|-1}, \cdots, \frac{\tpp}{|\sx|-1} \right) \\
  & \prec \left( \sum_{y \in \sy} p''_{XY}(|\sx|+1,y), \cdots, \sum_{y \in \sy} p''_{XY}(2|\sx|-1,y) \right),
}
where both the vectors involved in the above equation contain $|\sx|-1$ elements. Since the $\a$-quasi-norm is Schur-concave for $\a < 1$ (see proof of Lemma \ref{lem:CondSchurConcavity}), we get that
\eqal{
  & \left\lVert \left( \sum_{y \in \sy} p''_{XY}(|\sx|+1,y), \cdots, \sum_{y \in \sy} p''_{XY}(2|\sx|-1,y) \right) \right\rVert_{\a} \\
  & \leq \left\lVert \left( \frac{\tpp}{|\sx|-1}, \cdots, \frac{\tpp}{|\sx|-1} \right) \right\rVert_{\a},
}
or,
\begin{equation}
	\sum_{x=|\sx|+1}^{2|\sx|-1} \left( \sum_{y \in \sy} p''_{XY}(x,y) \right)^{\a} \leq \left( |\sx|-1 \right)^{1-\a} \tpp^\a,
\end{equation}
which leads to
\eqaln{
  & \pr{X}{Y}{p''} \\
  & \leq \frac{1}{1-\a} \log \left( \left\lVert \sum_{x=1}^{|\sx|} (p''_{XY}(x, \cdot))^{\a} \right\rVert_{1/\a} + \left( |\sx|-1 \right)^{1-\a} \tpp^\a \right).
}

At this point, we define $r_{XY} \in \mathcal{P}$ such that
\begin{equation}
	r_{XY}(x,y) = p''_{XY}(x,y), \quad \forall \ y \in \sy, \quad \forall \ x \in \sx,
\end{equation}
or, using \eqref{eq:condStepB.2.1.1}, \eqref{eq:condStepB.2.1.2} and \eqref{eq:condStepB.2.1.3},
\eqal{ \label{eq:condStepB.3}
	r_{XY}(1,y) = p'_{XY}(1,y), & \\
	r_{XY}(x,y) = q'_{XY}(x,y), & \quad \forall \ x = 2, \cdots, |\sx|,
}
for all $y \in \sy$, so that
\eqal{ \label{eq:Hpsec}
  & \pr{X}{Y}{p''} \\
  & \leq \frac{1}{1-\a} \log \left( \left\lVert \sum_{x=1}^{|\sx|} (r_{XY}(x, \cdot))^{\a} \right\rVert_{1/\a} + \left( |\sx|-1 \right)^{1-\a} \tpp^\a \right).
}
Note that
\begin{equation} \label{eq:epsPrime3}
	\sum_{y=1}^{|\sy|} \sum_{x=1}^{|\sx|} r_{XY}(x,y) = 1-\tpp \geq 1-\tp.
\end{equation}
\medskip

{\bf{Step D.2: Lower bounding $\pr{X}{Y}{q'}$}} We now turn to $q'_{XY}$. We have
\eqaln{
	\pr{X}{Y}{q'} & = \frac{\a}{1-\a} \log \left( \sum_{y \in \sy} \left[ \sum_{x=1}^{|\sx|} (q'_{XY}(x,y))^{\a} \right]^{1/\a} \right) \\
  & = \frac{\a}{1-\a} \log \left( \sum_{y \in J} \left[ \sum_{x=1}^{|\sx|} (q'_{XY}(x,y))^{\a} \right]^{1/\a} \right. \\
  & \hspace{2.2cm} \left. + \sum_{y \in J^c} \left[ \sum_{x=1}^{|\sx|} (q'_{XY}(x,y))^{\a} \right]^{1/\a} \right) \\
	& = \frac{\a}{1-\a} \log \left( \sum_{y \in J} \left[ \sum_{x=1}^{|\sx|} (q'_{XY}(x,y))^{\a} \right]^{1/\a} \right. \\
  & \hspace{2.2cm} \left. + \sum_{y \in J^c} q'_{XY}(1,y) \right)
}
where we made use of \eqref{eq:condStepB.1.4}. Define $q''_{XY} \in \mathcal{P}$ such that
\eqal{
	q''_{XY}(x,y) = q'_{XY}(x,y)-r_{XY}(x,y), & \quad \forall \ y \in J, \quad x = 1, \\
	q''_{XY}(x,y) = 0, & \quad \mathrm{else}.
}
The elements of $q''_{XY}$ are all non-negative as a consequence of \eqref{eq:condStepB.3}, \eqref{eq:condStepB.2.2}, \eqref{eq:condStepB.2.3} and \eqref{eq:condStepB.1.3}. Furthermore, using \eqref{eq:EpsStepB.1}, we have that
\begin{equation} \label{eq:EpsStepB.4}
	\sum_{y \in J} q''_{XY}(1,y) = \tp.
\end{equation}
Using this, we can write
\eqal{
	\pr{X}{Y}{q'} & = \frac{\a}{1-\a} \log \left( \sum_{y \in J} \left\lVert r_{XY}(.,y) + q''_{XY}(.,y) \right\rVert_{\a} \right. \\
  & \hspace{2.2cm} \left. + \sum_{y \in J^c} q'_{XY}(1,y) \right)
}
and using the reverse Minkowski inequality for $\a < 1$, and the monotonicity of the logarithm, we end up with
\eqaln{
  & \exp \left( \frac{1-\a}{\a} \pr{X}{Y}{q'} \right) \\
  & \geq \sum_{y \in J} \left\lVert r_{XY}(.,y) \right\rVert_{\a} + \sum_{y \in J} \left\lVert q''_{XY}(.,y) \right\rVert_{\a} + \sum_{y \in J^c} q'_{XY}(1,y) \\
	& = \sum_{y \in J} \left\lVert r_{XY}(.,y) \right\rVert_{\a} + \sum_{y \in J} q''_{XY}(1,y) + \sum_{y \in J^c} q'_{XY}(1,y) \\
	& = \sum_{y \in \sy} \left\lVert r_{XY}(.,y) \right\rVert_{\a} - \sum_{y \in J^c} r_{XY}(1,y) + \tp + \sum_{y \in J^c} q'_{XY}(1,y)
}
using \eqref{eq:EpsStepB.4}. Now, from \eqref{eq:condStepB.3} and \eqref{eq:epsPrime2}, we obtain
\begin{equation}
	\tp + \sum_{y \in J^c} q'_{XY}(1,y) - \sum_{y \in J^c} r_{XY}(1,y) = \tpp,
\end{equation}
so that
\begin{equation} \label{eq:Hqpri}
	\pr{X}{Y}{q'} \geq \frac{\a}{1-\a} \log \left( \sum_{y \in \sy} \left\lVert r_{XY}(.,y) \right\rVert_{\a} + \tpp \right).
\end{equation}

{\bf{Step E: Upper bounding the difference}} Putting \eqref{eq:DeltaHa}, \eqref{eq:Hpsec} and \eqref{eq:Hqpri} together, we see that our initial conditional entropy difference $\Delta H_\a$ is upper bounded as follows:
\eqaln{
\Delta H_\a & \leq \frac{1}{1-\a} \left[ \log \left( R(r_{XY})^\a + \left( |\sx|-1 \right)^{1-\a} \tpp^\a \right) \right. \\
& \hspace{1.5cm} \left. - \log \left( \vphantom{R(r_{XY})^\a + \left( |\sx|-1 \right)^{1-\a} \tpp^\a} \left[ R(r_{XY}) + \tpp\right]^\a \right) \right],
}
where we defined $R(r_{XY}) = \sum_{y \in \sy} \left\lVert r_{XY}(.,y) \right\rVert_{\a}$. Since $f(u) = u^\a$ is concave for $u \geq 0$, $\a \in [0,1)$, and $f(0) = 0$, it is also subadditive, so that
\eqaln{
	R(r_{XY}) & = \sum_{y \in \sy} \left( \sum_{x \in \sx} r_{XY}^{\a}(x,y) \right)^{1/\a} \\
	& \geq \sum_{y \in \sy} \sum_{x \in \sx} r_{XY}(x,y) \\
	& = 1 - \tpp
}
where we used \eqref{eq:epsPrime3}.

Furthermore, $\tpp \leq \tp \leq \epsilon \leq 1-1/|\sx|$. Using this along with Lemma \ref{lem:IncDiffLog}, we have that
\eqaln{
	\Delta H_\a & \leq \frac{1}{1-\a} \left[ \log \left( \left( 1 - \tpp \right)^\a + \left( |\sx|-1 \right)^{1-\a} \tpp^\a \right) \right. \\
  & \hspace{1.5cm} \left.  - \log \left( \left[ 1 - \tpp + \tpp\right]^\a \right) \right] \\
	& = \frac{1}{1-\a} \log \left( \left( 1 - \tpp \right)^\a + \left( |\sx|-1 \right)^{1-\a} \tpp^\a \right).
}
Finally, using Lemma \ref{lem:BoundMonotonic} along with the fact that $\tpp \leq \epsilon$, we conclude that 
\begin{equation}
	\Delta H_\a \leq \frac{1}{1-\a} \log \left( \left( 1 - \epsilon \right)^\a + \left( |\sx|-1 \right)^{1-\a} \epsilon^\a \right),
\end{equation}
which ends the proof of the bound. To see that the inequality is tight, one can consider the probability distributions whose elements satisfy the following relations:
\eqal{ \label{eq:qtight}
	& q_{XY}(1,1) = 1, \\
	& q_{XY}(x,1) = 0, \quad \forall \ x \in \sxm, \\
	& q_{XY}(x,y) = 0, \quad \forall \ x \in \sx, \forall \ y \in \sy \setminus \left\lbrace 1 \right\rbrace,
}
and
\eqal{ \label{eq:ptight}
	& p_{XY}(1,1) = 1-\epsilon \\
	& p_{XY}(x,1) = \frac{\epsilon}{|\sx|-1}, \quad \forall \ x \in \sxm, \\
	& p_{XY}(x,y) = 0, \quad \forall \ x \in \sx, \forall \ y \in \sy \setminus \left\lbrace 1 \right\rbrace.
}

\subsection{Proof of Theorem~\ref{theo:main3}}

Our proof is analogous to the proof of the uniform continuity bound for the conditional entropy of c-q states by Wilde~\cite{Wilde2020}. It relies on the use of a conditional dephasing channel and the data processing inequality to go from a c-q setting to a classical-classical one. However, the unitality of the dephasing channel, which was exploited by Wilde in his proof, cannot be used in the case of the conditional $\a$-R\'enyi entropy. Instead we use the notion of $\sx$-majorization. This allows us to then employ our Theorem~\ref{theo:main} to arrive at the desired result. 

Consider the following decompositions for $\rho_{AY}$ and $\sigma_{AY}$:
\begin{equation} \label{eq:theo:main31}
	\rho_{AY} = \sum_{y \in \sy} r_Y(y) \rho_A^y \otimes \proj{y}_Y,
\end{equation}
\begin{equation} \label{eq:theo:main32}
	\sigma_{AY} = \sum_{y \in \sy} s_Y(y) \sigma_A^y \otimes \proj{y}_Y,
\end{equation}
where $r_Y, s_Y \in \mathcal{P}_{\sy}$ and $\left\lbrace \rho_A^y \right\rbrace_{y \in \sy}$ and $\left\lbrace \sigma_A^y \right\rbrace_{y \in \sy}$ are sets of states of $A$.
Suppose without loss of generality that $\pr{A}{Y}{\rho} \leq \pr{A}{Y}{\sigma}$. Define the conditional dephasing channel as
\begin{equation}
	\mathcal{N}_{AY}[\omega_{AY}] = \sum_{x \in \sx, y \in \sy} \proj{\phi^y_x,y} \omega_{AY} \proj{\phi^y_x,y},
\end{equation}
where $\ket{\phi^y_x,y} \equiv \ket{\phi^y_x} \otimes \ket{y}$ and the states $\ket{y}$ and $\ket{\phi^y_x}$ are defined through \eqref{eq:rhoAY2} and \eqref{eq:rhoAY}. Here and henceforth we suppress the subscripts $A$ and $Y$ (denoting the subsystems) for notational simplicity. We have that
\begin{equation}
	\mathcal{N}_{AY}[\rho_{AY}] = \rho_{AY},
\end{equation}
while
\begin{equation}
	\mathcal{N}_{AY}[\sigma_{AY}] = \sum_{y \in \sy, x \in \sx} s_Y(y) \tilde{s}_{X|Y}(x|y) \proj{\phi^y_x} \otimes \proj{y},
\end{equation}
where
\begin{equation}
	\tilde{s}_{X|Y}(x|y) = \bra{\phi^y_x} \sigma_A^y \ket{\phi^y_x}, \quad \forall x \in \sx, \forall y \in \sy,
\end{equation}
so that $\tilde{s}_{X|Y}$ is a conditional probability distribution. Now, for each $y \in \sy$, consider the spectral decomposition of $\sigma_A^y$:
\begin{equation}
	\sigma_A^y = \sum_{x \in \sx} s_{X|Y}(x|y) \proj{\psi^y_x},
\end{equation}
where $|\sx| = d_A$ is the dimension of system $A$, $s_{X|Y}$ is a conditional probability distribution defined as
\begin{equation}
	s_{X|Y}(x|y) = \bra{\psi^y_x} \sigma_A^y \ket{\psi^y_x}, \quad \forall x \in \sx, \forall y \in \sy,
\end{equation}
and $\left\lbrace \psi^y_x \right\rbrace_{x \in \sx}$ is a set of orthonormal states for a fixed value of $y \in \sy$. Since for any positive semi-definite matrix, the vector of diagonal elements in any basis is majorized by the vector of eigenvalues~\cite{Majorization}, we have that
\begin{equation}
	\tilde{s}_{X|Y=y} \prec s_{X|Y=y}, \quad \forall y \in \sy.
\end{equation}
Define
\begin{equation} \begin{aligned}
  s_{XY}(x,y) & = s_Y(y) s_{X|Y}(x|y), \\
  \tilde{s}_{XY}(x,y) & = s_Y(y) \tilde{s}_{X|Y}(x|y), \end{aligned} \qquad \forall x\in \sx, y \in \sy,
\end{equation}
so that
\begin{equation}
	\tilde{s}_Y(y) = \sum_{x \in \sx} \tilde{s}_{XY}(x,y) = s_Y(y) = \sum_{x \in \sx} s_{XY}(x,y),
\end{equation}
for all $y \in \sy$. Using Lemma \ref{lem:cmajToMaj}, we have that
\begin{equation}
	\tilde{s}_{XY} \cmajj s_{XY},
\end{equation}
so that according to Lemma \ref{lem:CondSchurConcavity},
\begin{equation}
	\pr{X}{Y}{s} \leq \pr{X}{Y}{\tilde{s}}, \quad \forall \a \in [0,1).
\end{equation}
Now, using \eqref{eq:PetzToArimoto}, we have
\eqal{
  \pr{X}{Y}{r} & = \pr{A}{Y}{\rho} \\
  & \leq \pr{A}{Y}{\sigma} \\
  & = \pr{X}{Y}{s} \\
  & \leq \pr{X}{Y}{\tilde{s}}.
}
Meanwhile, from data processing inequality for the normalized trace distance,
\eqal{
	\frac{1}{2} || \rho_{AY} - \sigma_{AY} ||_1 & \geq \frac{1}{2} \left\Vert \mathcal{N}_{AY}[\rho_{AY}] - \mathcal{N}_{AY}[\sigma_{AY}] \right\Vert_1 \\
	& = \mathrm{TV}(r_{XY},\tilde{s}_{XY}),
}
so that
\begin{equation}
	\mathrm{TV}(r_{XY},\tilde{s}_{XY}) \leq \epsilon.
\end{equation}
Using Theorem \ref{theo:main}, we have
\eqal{
  & |\pr{A}{Y}{\rho} - \pr{A}{Y}{\sigma}| \\
  & = \pr{A}{Y}{\sigma} - \pr{A}{Y}{\rho} \\
	& \leq \pr{X}{Y}{\tilde{s}} - \pr{X}{Y}{r} \\
	& \leq \frac{1}{1-\a} \log \left( \left( 1- \epsilon \right)^{\a} + \left(|\sx|-1\right)^{1-\a} \epsilon^{\a} \right).
}
In order to show that the inequality is tight, one can simply consider states that are diagonal in the same basis and whose matrix elements are given by \eqref{eq:qtight} and \eqref{eq:ptight}.

\section{Discussion and open problems}

In this paper, we have proven tight uniform continuity bounds for the $\a$-Arimoto-R\'enyi conditional entropy for $\alpha \in [0,1)$, for joint probability distributions of a pair of discrete random variables with finite alphabets, as well as for the conditional R\'enyi entropy for classical-quantum systems, with the conditioning being on the classical system. In the limit $\a \to 1$, our results yield the corresponding recently obtained bounds for the conditional Shannon entropy and the conditional entropy of a c-q state, respectively. It is interesting to note that our bound for the $\a$-ARCE is identical to the one obtained by Audenaert~\cite{Audenaert} for the unconditional R\'enyi entropy (see (7) of the Appendix of~\cite{Audenaert}). The same is true for the bound obtained by Alhejji and Smith~\cite{AlhejjiSmith} for the conditional Shannon entropy: it is identical to the bound for the Shannon entropy (see~\cite{Zhang}). It would be interesting to see whether an intuitive reason for this can be found. 
\smallskip

A natural next step would be to investigate the continuity of the $\a$-conditional R\'enyi entropy of bipartite quantum systems or classical-quantum systems with the conditioning being on the quantum system, for which there are no results even for the quantum conditional entropy. Finding analogues of our results for $\a >1$ also remains open.


%

\appendices

\section{Conditional majorization \label{CondMajGour}}

The following notion of conditional majorization was introduced by Gour {\em{et al}} in~\cite{CondMaj}.

\begin{defi} \label{defi:condMajGour}
Denote by $\mathbb{R}^{n \times l}_+$ the set of all $n \times l$ matrices with non-negative values. Consider $P \in \mathbb{R}^{n \times l}_+$ and $Q \in \mathbb{R}^{n \times m}_+$. We say $Q$ is conditionally majorized by $P$, written $Q \gmajj P$, if there exist matrices $D^{(j)}$ and $R^{(j)}$, where $j$ can run over an arbitrary number of values, such that
\begin{equation}
	Q = \sum_j D^{(j)} P R^{(j)},
\end{equation}
where each $D^{(j)}$ is an $n \times n$ doubly-stochastic matrix and each $R^{(j)}$ is an $l \times m$ matrix of non-negative entries, with $\sum_j R^{(j)}$ row-stochastic.
\end{defi}


The following necessary and sufficient condition for conditional majorization was proven by Gour {\em{et al}} ( see Theorem 1 of~\cite{CondMaj}):

\begin{lem} \label{lem:CharacCondMajGour}
Let $p_{XY}, q_{XY} \in \mathcal{P}_{\sx \times \sy}$. We have that $q_{XY} \gmajj p_{XY}$ if and only if for all convex symmetric functions $\Phi$ 
\begin{equation}
    \sum_{y \in \sy} p_Y(y) \Phi\left( p_{X|Y=y} \right) \geq \sum_{y \in \sy} q_Y(y) \Phi\left( q_{X|Y=y} \right).
\end{equation}
\end{lem}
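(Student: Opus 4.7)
The plan is to prove the two directions of the equivalence separately. The forward implication is a direct computation exploiting the doubly-stochastic and row-stochastic structure in the decomposition, while the converse requires a separation/duality argument and constitutes the main obstacle.

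For the forward direction, suppose $q_{XY} = \sum_j D^{(j)} p_{XY} R^{(j)}$ with each $D^{(j)}$ doubly stochastic, $R^{(j)} \geq 0$, and $\sum_j R^{(j)}$ row-stochastic. Summing over the row index and using the column-stochasticity of each $D^{(j)}$ immediately yields $q_Y(y) = \sum_{j,y'} p_Y(y')\, R^{(j)}_{y',y}$. Setting $w^{(j)}_{y',y} := p_Y(y')\, R^{(j)}_{y',y} / q_Y(y)$ for indices with $q_Y(y) > 0$, these weights sum to one over $(j,y')$ for each fixed $y$, and a short manipulation of the decomposition produces
\begin{equation}
    q_{X|Y=y} \;=\; \sum_{j,y'} w^{(j)}_{y',y}\; D^{(j)} p_{X|Y=y'}.
\end{equation}
Convexity of $\Phi$ together with Hardy--Littlewood--P\'olya (since $D^{(j)}$ doubly stochastic implies $D^{(j)} v \prec v$, hence $\Phi(D^{(j)} v) \leq \Phi(v)$ for every convex symmetric $\Phi$) then gives $\Phi(q_{X|Y=y}) \leq \sum_{j,y'} w^{(j)}_{y',y}\, \Phi(p_{X|Y=y'})$. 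Multiplying by $q_Y(y)$, summing over $y$, and invoking $\sum_{j,y} R^{(j)}_{y',y} = 1$ (the row-stochasticity of $\sum_j R^{(j)}$) delivers the desired inequality.

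For the converse, let $\mathcal{C}(p_{XY})$ denote the set of all matrices admitting a decomposition of the form prescribed in Definition~\ref{defi:condMajGour}. Convexity of $\mathcal{C}(p_{XY})$ is immediate from mixing decompositions, and compactness follows by bounding the number of summands via Carath\'eodory's theorem together with the compactness of the Birkhoff polytope and of the row-stochastic polytope. Supposing $q_{XY} \notin \mathcal{C}(p_{XY})$, a separating hyperplane furnishes a matrix $M$ with $\langle M, q_{XY}\rangle > \sup_{\tilde q \in \mathcal{C}(p_{XY})}\langle M, \tilde q\rangle$, and the remaining task is to upgrade this linear separator into a convex symmetric $\Phi$ witnessing the violation of the inequality.

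\textbf{The main obstacle} is precisely this upgrading step, as a generic separating matrix $M$ need not correspond to a functional of the form $P_{XY}\mapsto \sum_y P_Y(y)\,\Phi(P_{X|Y=y})$. The plan is to test against the Ky Fan functions $\Phi_k(v) = \sum_{i=1}^k v_i^{\downarrow}$ for $k = 1,\ldots, n$, which are convex, symmetric, and positively homogeneous of degree one; the last property gives the crucial identity $\sum_y q_Y(y)\,\Phi_k(q_{X|Y=y}) = \sum_y \Phi_k(q_{XY}(\cdot,y))$, placing such functionals in exactly the prescribed form. I would then argue, via an LP-duality/min-max argument over the joint feasible set of doubly-stochastic $D^{(j)}$'s and non-negative $R^{(j)}$'s with $\sum_j R^{(j)}$ row-stochastic, that if $\sum_y \Phi_k(q_{XY}(\cdot,y)) \leq \sum_y \Phi_k(p_{XY}(\cdot,y))$ held for every $k$, then an explicit decomposition $q_{XY} = \sum_j D^{(j)} p_{XY} R^{(j)}$ would have to exist, contradicting $q_{XY} \notin \mathcal{C}(p_{XY})$. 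Hence some Ky Fan function must witness the strict inequality, yielding the required $\Phi$. Verifying the hypotheses for Sion's min-max theorem and carefully tracking the combinatorial column-mixing degrees of freedom encoded by the $R^{(j)}$'s is the technical heart of the argument.
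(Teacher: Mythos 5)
The paper does not prove this lemma: it is quoted verbatim from Gour \emph{et al.} (Theorem 1 of \cite{CondMaj}) and used only as an alternative route to Lemma~\ref{lem:CondSchurConcavity}, which the paper proves directly. So there is no ``paper's own proof'' to compare against, and the relevant question is whether your from-scratch attempt is complete.

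Your forward direction is correct and clean. From $q_{XY} = \sum_j D^{(j)} p_{XY} R^{(j)}$, summing over rows and using that each $D^{(j)}$ is column-stochastic gives $q_Y(y) = \sum_{j,y'} p_Y(y') R^{(j)}_{y',y}$, the mixture identity $q_{X|Y=y} = \sum_{j,y'} w^{(j)}_{y',y}\, D^{(j)} p_{X|Y=y'}$ follows, and then convexity plus Schur-convexity of $\Phi$ (via $D^{(j)} v \prec v$) and the row-stochasticity of $\sum_j R^{(j)}$ close the computation. That part stands.

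The converse is where the proposal has a genuine gap. You correctly identify the difficulty — a separating linear functional $M$ need not be of the required form $P_{XY} \mapsto \sum_y P_Y(y)\,\Phi(P_{X|Y=y})$ — but you do not resolve it; you only announce that you ``would then argue'' via Ky Fan functions and Sion's min-max. The load-bearing claim, that the family of Ky Fan functionals $\Phi_k(v) = \sum_{i\le k} v_i^{\downarrow}$ alone suffices to characterize membership in $\mathcal{C}(p_{XY})$, is not self-evident: it amounts to saying that all the constraints coming from non-homogeneous convex symmetric $\Phi$ are redundant, which would be a materially stronger and more surprising statement than the lemma itself. Without either a direct proof of this sufficiency, an explicit extremality/support-function argument for the set $\mathcal{C}(p_{XY})$, or a reduction to a known characterization, the converse is unproved. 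As the forward direction is the only half actually needed in the paper (to show that $\mathcal{X}$-majorization implies monotonicity of the $\alpha$-ARCE), the honest thing to do is to either cite \cite{CondMaj} for the equivalence or restrict your statement to the one implication you can prove.
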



\section{Further tools for the proofs}

\begin{lem} \label{lem:majEps}
Consider a vector $\bos{v} \in \mathbb{R}^n$ whose elements are non-negative and are arranged in non-increasing order. For some $i \in \left\lbrace 2, \cdots, n \right\rbrace$ and some $s \in (0, v_i]$, define the vector $\bos{v}^{(i)} \in \mathbb{R}^n$ whose elements satisfy the following relations:
\eqs{
	v^{(i)}_1 & = v_1 + s, \\
	v^{(i)}_i & = v_i - s, \\
	v^{(i)}_j & = v_j, \quad \forall j \in \left\lbrace 2, \cdots, n \right\rbrace \setminus \left\lbrace i \right\rbrace.
}
Then $\bos{v} \prec \bos{v}^{(i)}$.
\end{lem}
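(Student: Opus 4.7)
My plan is to exhibit a doubly--stochastic matrix $D$ such that $\bos{v}=D\bos{v}^{(i)}$; by the Hardy--Littlewood--P\'olya theorem (already invoked in the body of the paper) this immediately gives $\bos{v}\prec\bos{v}^{(i)}$.

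The natural candidate is a T--transform acting on coordinates $1$ and $i$, namely $D=\lambda I+(1-\lambda)P_{1,i}$, where $P_{1,i}$ swaps entries $1$ and $i$ and $\lambda\in[0,1]$ is to be chosen. Applying $D$ to $\bos{v}^{(i)}$ leaves every coordinate outside $\{1,i\}$ unchanged (and equal to $v_j$ by construction), while on coordinates $1$ and $i$ it produces the two equations $\lambda(v_1+s)+(1-\lambda)(v_i-s)=v_1$ and $(1-\lambda)(v_1+s)+\lambda(v_i-s)=v_i$. Adding them is automatic, and subtracting them yields $\lambda=\tfrac{1}{2}\bigl(1+\tfrac{v_1-v_i}{v_1-v_i+2s}\bigr)$. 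Since $\bos{v}$ is non--increasing we have $v_1-v_i\ge0$, and since $s>0$ the denominator is strictly positive, so $\lambda\in[\tfrac12,1]\subset[0,1]$, confirming that $D$ is doubly stochastic.

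An alternative, purely elementary route that I would mention as a sanity check is to verify the partial--sum inequalities directly. For $k\ge i$ one has $\sum_{j=1}^k v^{(i)\downarrow}_j\ge\sum_{j=1}^k v^{(i)}_j=\sum_{j=1}^k v_j$, since both the $+s$ and $-s$ modifications fall within the first $k$ coordinates. For $k<i$ one has $\sum_{j=1}^k v^{(i)\downarrow}_j\ge\sum_{j=1}^k v^{(i)}_j=s+\sum_{j=1}^k v_j\ge\sum_{j=1}^k v_j$, because $v_1+s$ and $v_2,\dots,v_k$ are $k$ of the entries of $\bos{v}^{(i)}$. The total sum is manifestly preserved, closing the majorization conditions.

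There is essentially no obstacle: the only minor point to check is that the mixing parameter $\lambda$ obtained by solving the $2\times2$ linear system really lies in $[0,1]$, which is exactly where the hypotheses ``$\bos{v}$ non--increasing'' (so $v_1\ge v_i$) and ``$s>0$'' are used. The hypothesis $s\le v_i$ is not needed for the majorization statement itself but guarantees that $\bos{v}^{(i)}$ has non--negative entries, which is the setting in which the lemma is applied later in the paper.
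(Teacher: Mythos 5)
Your proof is correct, and your primary route is genuinely different from the paper's. You exhibit an explicit T-transform $D=\lambda I+(1-\lambda)P_{1,i}$ with $\lambda=\frac{v_1-v_i+s}{v_1-v_i+2s}\in[\tfrac12,1)$, verify that $\bos{v}=D\bos{v}^{(i)}$, and invoke the Hardy--Littlewood--P\'olya characterization of majorization by doubly stochastic matrices. The paper instead checks the partial-sum inequalities directly, which requires first determining the index $k$ at which the decreased entry $v_i-s$ sits in the sorted vector $\bos{v}^{(i)\downarrow}$ and then treating the ranges $l<i$, $i\le l<k$, $l=k$, $l>k$ separately. Your T-transform argument sidesteps that bookkeeping entirely, at the cost of using the doubly-stochastic-matrix equivalence (already cited in the body of the paper), so it is arguably the cleaner route. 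Your ``sanity check'' is essentially a streamlined form of the paper's argument: instead of locating the sorted position of $v_i-s$, you use the elementary fact that $\sum_{j=1}^k v^{(i)\downarrow}_j\ge\sum_{j=1}^k v^{(i)}_j$ for every $k$, after which the comparison with $\sum_{j=1}^k v_j$ (splitting into $k<i$ and $k\ge i$) is immediate. Your closing observation --- that the hypothesis $s\le v_i$ is used only to keep $\bos{v}^{(i)}$ nonnegative and is not needed for the majorization itself --- is also correct.
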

\begin{proof}
We have that $v^{(i)}_j \geq v^{(i)}_{j+1}$ for all $j \in \left\lbrace 1, \cdots, i-1 \right\rbrace$. Now, there exists $k \in \left\lbrace i+1, \cdots, n \right\rbrace$ such that $v^{(i)}_i \leq v^{(i)}_j$ for all $j \in \left\lbrace 1, \cdots, k \right\rbrace$ and $v^{(i)}_i > v^{(i)}_j$ for all $j \in \left\lbrace k+1, \cdots, n \right\rbrace$, so that the vector $\bos{v}^{(i)\downarrow} \coloneqq \left( v^{(i)}_1, \cdots, v^{(i)}_{i-1}, \cdots, v^{(i)}_k, v^{(i)}_i, v^{(i)}_{k+1}, \cdots, v^{(i)}_n \right)^{\mathrm{T}}$ has its elements arranged in non-increasing order. In that case,
\begin{equation}
	\sum_{j=1}^{l} v^{(i)\downarrow}_j = \sum_{j=1}^{l} v_j + s > \sum_{j=1}^{l} v_j, \quad \forall l = 1, \cdots i-1.
\end{equation}
Using the equality in the above equation, we have for all $l = i, \cdots k-1$,
\eqal{ \label{eq:lem:majEps}
	\sum_{j=1}^{l} v^{(i)\downarrow}_j & = \sum_{j=1}^{i-1} v_j + s + \sum_{j=i}^{l} v^{(i)\downarrow}_j \\
	& = \sum_{j=1}^{i-1} v_j + s + \sum_{j=i+1}^{l+1} v_j \\
	& = \sum_{j=1}^l v_j + (s - v_i) + v_{l+1} \\
	& > \sum_{j=1}^l v_j.
}
In order to obtain the last inequality in the above set of equations, notice that $v^{(i)}_i \leq v^{(i)}_k$, or, $v_i-s \leq v_k$. Now, $l+1 \leq k$, and since the elements of $\bos{v}$ are sorted in non-increasing order, $v_{l+1} \geq v_k$, so that $v_i-s \leq v_{l+1}$, which implies the last inequality in \eqref{eq:lem:majEps}. Similarly, using the last equality in the above equation,
\eqal{
	\sum_{j=1}^{k} v^{(i)\downarrow}_j & = \sum_{j=1}^{k-1} v^{(i)\downarrow}_j + v^{(i)\downarrow}_k \\
	& = \sum_{j=1}^{k-1} v_j + (s - v_i) + v_k + (v_i - s) \\
	& = \sum_{j=1}^{k} v_j.
}
Finally, using the above equation, we have for all $l = k+1, \cdots n$,
\eqal{
	\sum_{j=1}^{l} v^{(i)\downarrow}_j & = \sum_{j=1}^{k} v^{(i)\downarrow}_j + \sum_{j=k+1}^{l} v^{(i)\downarrow}_j \\
	& = \sum_{j=1}^{k} v_j + \sum_{j=k+1}^{l} v_j \\
	& = \sum_{j=1}^{l} v_j.
}
This ends the proof.
\end{proof}


\begin{lem} \label{lem:MajPerp}
Let $\bos{v}, \bos{v}', \bos{v}^{\perp} \in \mathbb{R}^n$ be three vectors with non-negative entries, such that $\bos{v}$ and $\bos{v}'$ are in the same subspace whereas $\bos{v}^{\perp}$ is in an orthogonal subspace. If $\bos{v}' \prec \bos{v}$, then $\bos{v}'+\bos{v}^{\perp} \prec \bos{v}+\bos{v}^{\perp}$.
\end{lem}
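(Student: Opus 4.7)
\noindent\textbf{Proof plan for Lemma \ref{lem:MajPerp}.} I would rely on the Hardy--Littlewood--P\'olya characterization already invoked elsewhere in the paper: for non-negative vectors of equal total mass, $\bos{u}\prec\bos{w}$ if and only if $\bos{u}=D\bos{w}$ for some doubly-stochastic matrix $D$. I read the hypothesis ``$\bos{v}$ and $\bos{v}'$ in the same subspace, $\bos{v}^{\perp}$ in an orthogonal one'' in the coordinate sense, which is exactly how the lemma is used in the paper (compare \eqref{eq:eq4}, where only two coordinates of $\bos{v}^{(i-1)}(y)$ are altered and all others are held fixed): there is an index set $S\subseteq\{1,\dots,n\}$ on which both $\bos{v}$ and $\bos{v}'$ are supported, while $\bos{v}^{\perp}$ is supported on the complement $S^{c}$.

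\noindent The argument would then proceed in two short steps. First, since zero entries sit at the tail of any sorted vector and contribute neither to the partial sums nor to the total sum, the hypothesis $\bos{v}'\prec\bos{v}$ in $\mathbb{R}^{n}$ is equivalent to $\bos{v}'|_{S}\prec\bos{v}|_{S}$ in $\mathbb{R}^{|S|}$. By Hardy--Littlewood--P\'olya there therefore exists a doubly-stochastic matrix $D_{S}$ acting on $\mathbb{R}^{|S|}$ with $\bos{v}'|_{S}=D_{S}\bos{v}|_{S}$. Second, I would lift $D_{S}$ to $\mathbb{R}^{n}$ via the block-diagonal matrix $\tilde{D}:=D_{S}\oplus I_{S^{c}}$ (after permuting rows and columns back to the natural coordinate order if necessary). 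Each block is doubly stochastic, so $\tilde{D}$ itself is doubly stochastic on $\mathbb{R}^{n}$.

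\noindent A one-line calculation then closes the proof: because $\bos{v}$ vanishes on $S^{c}$ and $\bos{v}^{\perp}$ vanishes on $S$, the vector $\bos{v}+\bos{v}^{\perp}$ splits as $(\bos{v}|_{S})\oplus(\bos{v}^{\perp}|_{S^{c}})$, so that $\tilde{D}(\bos{v}+\bos{v}^{\perp})=(D_{S}\bos{v}|_{S})\oplus(I\,\bos{v}^{\perp}|_{S^{c}})=\bos{v}'+\bos{v}^{\perp}$. Applying Hardy--Littlewood--P\'olya in the reverse direction yields $\bos{v}'+\bos{v}^{\perp}\prec\bos{v}+\bos{v}^{\perp}$, which is the claim. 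I do not anticipate a genuine obstacle: the only thing requiring a moment's care is the (essentially trivial) verification that the block-diagonal lift remains doubly stochastic and that the interpretation of the two ``orthogonal subspaces'' as complementary coordinate supports is the one intended by the applications in Step B and Step C of the main proof.
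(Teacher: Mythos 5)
Your proposal is correct and takes essentially the same route as the paper: the paper writes $\bos{v}' = \sum_j \mu_j \Pi_j \bos{v}$ with permutations that fix the coordinates supporting $\bos{v}^{\perp}$ (the Birkhoff form of your block-diagonal $\tilde{D}$) and applies it by linearity to $\bos{v}+\bos{v}^{\perp}$. Your version is, if anything, slightly more explicit about why the doubly-stochastic map may be taken block-diagonal with respect to $S$ and $S^{c}$, a point the paper's proof asserts rather tersely.
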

\begin{proof}
If $\bos{v}' \prec \bos{v}$, there exist $\mu_j \in \mathbb{R}$ such that $\mu_j \geq 0, \forall j$ and $\sum_j \mu_j = 1$ and permutation matrices $\Pi_j$ of dimension $n$ such that
\begin{equation}
	\bos{v}' = \sum_j \mu_j \Pi_j \bos{v}.
\end{equation}
The permutations $\Pi_j$ act only on the subspace spanning $\bos{v}$ and $\bos{v}'$. As a consequence,
\begin{equation}
\bos{v}'+\bos{v}^{\perp} = \sum_j \mu_j \Pi_j \left( \bos{v}+\bos{v}^{\perp} \right),
\end{equation}
so that $\bos{v}'+\bos{v}^{\perp} \prec \bos{v}+\bos{v}^{\perp}$.
\end{proof}


\begin{lem} \label{lem:majEpsComp}
Consider a vector $\bos{v} \in \mathbb{R}^{n+m}$ whose elements are non-negative and satisfy $v_j = 0$ for all $j \in \left\lbrace n+1, \cdots, n+m \right\rbrace$. For some $i \in \left\lbrace 1, \cdots, n \right\rbrace$, some $j \in \left\lbrace n+1, \cdots, n+m \right\rbrace$ and some $s \in (0, v_i]$, define the vector $\bos{u} \in \mathbb{R}^n$ whose elements satisfy the following relations:
\eqs{
	u_i & = v_i - s, \\
	u_j & = s, \\
	u_k & = v_k, \quad \forall k \in \left\lbrace 1, \cdots, n+m \right\rbrace \setminus \left\lbrace i,j \right\rbrace.
}
Then $\bos{u} \prec \bos{v}$.
\end{lem}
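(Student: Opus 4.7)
\textbf{Proof proposal for Lemma \ref{lem:majEpsComp}.} The plan is to exhibit $\bos{u}$ as the image of $\bos{v}$ under a doubly stochastic matrix, and then invoke the Hardy--Littlewood--P\'olya characterization of majorization already cited in the paper.

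The key observation is that the operation described in the lemma moves a weight $s$ from a positive entry $v_i$ into a zero entry $v_j$, while leaving all other coordinates unchanged. I would write this operation as a convex combination of the identity and the transposition $\Pi$ that swaps coordinates $i$ and $j$. Concretely, set $\lambda = 1 - s/v_i \in [0,1)$ (the hypothesis $s \in (0, v_i]$ ensures $v_i > 0$ so this is well defined) and define $D = \lambda I + (1-\lambda) \Pi$. Since $D$ is a convex combination of permutation matrices, it is doubly stochastic.

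Next I would verify coordinate by coordinate that $\bos{u} = D\bos{v}$. For any $k \notin \{i,j\}$, the permutation leaves $v_k$ fixed, so $(D\bos{v})_k = \lambda v_k + (1-\lambda) v_k = v_k = u_k$. For $k = i$, using $v_j = 0$, one finds $(D\bos{v})_i = \lambda v_i + (1-\lambda) v_j = (1 - s/v_i) v_i = v_i - s = u_i$. For $k = j$, similarly $(D\bos{v})_j = \lambda v_j + (1-\lambda) v_i = (s/v_i) v_i = s = u_j$. This checks $\bos{u} = D\bos{v}$ with $D$ doubly stochastic.

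By the Hardy--Littlewood--P\'olya theorem already recalled in Section~\ref{prelim}, the existence of such a $D$ is equivalent to $\bos{u} \prec \bos{v}$, which concludes the proof. I do not anticipate a serious obstacle here: the only subtlety is the harmless edge case $v_i = 0$, which is excluded by $s \in (0, v_i]$. Compared to Lemma~\ref{lem:majEps}, whose proof manipulates partial sums of the sorted vector directly, this alternative doubly-stochastic-combination argument is shorter because the zero value of $v_j$ lets us realize $\bos{u}$ as a two-permutation convex combination; one could equally well reproduce the partial-sum calculation, but it would be more tedious and would obscure the essentially geometric content of the statement.
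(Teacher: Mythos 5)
Your argument is correct, and it takes a genuinely different route from the paper's. The paper proves Lemma~\ref{lem:majEpsComp} by splitting the vectors into an ``active'' piece supported on coordinates $\{i,j\}$ and an ``orthogonal'' remainder $\bos{v}^\perp$ carrying all other entries: it notes that $\bos{\tilde{u}}=(v_i-s)\bos{e}_i+s\,\bos{e}_j \prec v_i\,\bos{e}_i=\bos{\tilde{v}}$ trivially, and then invokes Lemma~\ref{lem:MajPerp} to add $\bos{v}^\perp$ to both sides. You instead exhibit the doubly stochastic matrix $D=\lambda I+(1-\lambda)\Pi_{ij}$ with $\lambda=1-s/v_i$ explicitly, check $\bos{u}=D\bos{v}$ coordinate by coordinate, and appeal to the Hardy--Littlewood--P\'olya characterization already recalled in Section~\ref{prelim}. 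Both are valid; your computation is clean and uses the hypothesis $v_j=0$ at exactly the point where the diagonal entry $(D\bos{v})_i=\lambda v_i+(1-\lambda)v_j$ needs $v_j$ to vanish, and the hypothesis $s\in(0,v_i]$ exactly where $\lambda\in[0,1)$ is needed for $D$ to be a convex combination. The trade-off is that your construction is self-contained and shorter, but specific to this ``move weight into a zero slot'' situation, whereas the paper's decomposition reuses the auxiliary Lemma~\ref{lem:MajPerp}, which it also needs elsewhere (e.g.\ in Step~B together with Lemma~\ref{lem:majEps}), so the paper's choice buys modularity rather than brevity.
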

\begin{proof}
Define the vectors $\bos{\tilde{v}}$, $\bos{\tilde{u}}$ and $\bos{v}^{\perp} \in \mathbb{R}^{n+m}$ whose elements are non-negative and satisfy the following relations:
\eqs{
	\tilde{v}_i & = v_i, \\
	\tilde{v}_k & = 0, \quad \forall k \in \left\lbrace 1, \cdots, n+m \right\rbrace \setminus \left\lbrace i \right\rbrace,
}
\eqs{
	\tilde{u}_i & = v_i - s, \\
	\tilde{u}_j & = s, \\
	\tilde{u}_k & = 0, \quad \forall k \in \left\lbrace 1, \cdots, n+m \right\rbrace \setminus \left\lbrace i,j \right\rbrace,
}
and
\eqs{
	v^{\perp}_i & = 0, \\
	v^{\perp}_j & = 0, \\
	v^{\perp}_k & = v_k, \quad \forall k \in \left\lbrace 1, \cdots, n+m \right\rbrace \setminus \left\lbrace i,j \right\rbrace.
}
Note that $\bos{v} = \bos{\tilde{v}} + \bos{v}^{\perp}$ and $\bos{u} = \bos{\tilde{u}} + \bos{v}^{\perp}$. Now, we trivially have that $\bos{\tilde{u}} \prec \bos{\tilde{v}}$, so that, using Lemma \ref{lem:MajPerp}, $\bos{u} \prec \bos{v}$.
\end{proof}


\begin{lem} \label{lem:IncDiffLog}
Let $\a \in [0,1)$, $|\sx| \geq 1$ and $\tpp \in (0, 1-\frac{1}{|\sx|}]$. The function $f_{\a,|\sx|,\tpp} : [0, \infty) \rightarrow [0, \infty)$ defined as
\begin{equation}
	f_{\a,|\sx|,\tpp}(u) \coloneqq \log \left[ u^\a + \left( |\sx|-1 \right)^{1-\a} \tpp^\a \right] - \log \left[ \left( u + \tpp \right)^\a \right]
\end{equation}
is monotonically decreasing for $u \geq 1-\tpp$.
\end{lem}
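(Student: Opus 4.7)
The plan is to prove monotonicity by a direct derivative computation and then verify the resulting inequality on the interval $u \ge 1-\tilde{t}$ using the hypothesis $\tilde{t}\le 1-1/|\sx|$. It is convenient to abbreviate $C \coloneqq (|\sx|-1)^{1-\a}$, so that
\begin{equation*}
f_{\a,|\sx|,\tpp}(u) \;=\; \log\bigl(u^{\a}+C\tpp^{\a}\bigr) \;-\; \a\,\log(u+\tpp).
\end{equation*}
(The degenerate case $|\sx|=1$ makes the hypothesis $\tpp\in(0,1-1/|\sx|]$ vacuous, so henceforth assume $|\sx|\ge 2$; the case $\a=0$ makes $f$ constantly equal to $\log|\sx|$ and is trivial, so I will assume $\a\in(0,1)$.)

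First I would differentiate with respect to $u$ to obtain
\begin{equation*}
f_{\a,|\sx|,\tpp}'(u) \;=\; \frac{\a\, u^{\a-1}}{u^{\a}+C\tpp^{\a}} \;-\; \frac{\a}{u+\tpp}.
\end{equation*}
Multiplying through by the positive quantity $(u+\tpp)(u^{\a}+C\tpp^{\a})/\a$, the sign condition $f'_{\a,|\sx|,\tpp}(u)\le 0$ is equivalent to
\begin{equation*}
u^{\a-1}(u+\tpp) \;\le\; u^{\a}+C\tpp^{\a},
\end{equation*}
and after cancelling $u^{\a}$ on both sides and dividing by $\tpp>0$, this becomes
\begin{equation*}
u^{\a-1} \;\le\; C\,\tpp^{\a-1}.
\end{equation*}

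Next, since $\a-1<0$ the function $x\mapsto x^{\a-1}$ is strictly decreasing on $(0,\infty)$, so the above inequality is equivalent to $u/\tpp \ge C^{1/(1-\a)} \cdot (\tpp/\tpp)$ appropriately rearranged; explicitly, raising both sides to the power $1/(\a-1)<0$ gives
\begin{equation*}
u \;\ge\; \frac{\tpp}{|\sx|-1}.
\end{equation*}
So the proof reduces to checking that this lower bound on $u$ is implied by $u\ge 1-\tpp$ in the regime of the lemma.

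Finally I would verify that $1-\tpp \ge \tpp/(|\sx|-1)$ under the assumption $\tpp\le 1-1/|\sx|$. Multiplying out, this inequality is equivalent to $(|\sx|-1)(1-\tpp)\ge \tpp$, i.e.\ to $|\sx|-1 \ge |\sx|\,\tpp$, which is precisely $\tpp \le 1-1/|\sx|$. Thus for every $u\ge 1-\tpp$ we obtain $u\ge \tpp/(|\sx|-1)$, hence $f'_{\a,|\sx|,\tpp}(u)\le 0$, completing the proof. I do not expect any genuine obstacle: the only subtlety is correctly tracking the sign flips when inverting the power $\a-1<0$, and recognising that the hypothesis $\tpp\le 1-1/|\sx|$ is exactly the condition needed to make $1-\tpp$ lie in the monotonicity regime.
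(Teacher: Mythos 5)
Your proof is correct and follows essentially the same route as the paper: differentiate, reduce the sign condition $f'\le 0$ algebraically to $u\ge \tpp/(|\sx|-1)$, and verify that the hypothesis $\tpp\le 1-1/|\sx|$ guarantees $1-\tpp\ge\tpp/(|\sx|-1)$. The extra remarks on the degenerate cases $|\sx|=1$ and $\a=0$ are harmless additions not present in the paper's argument.
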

\begin{proof}
We simply compute the derivative
\eqaln{
  & \frac{\partial}{\partial u} f_{\a,|\sx|,\tpp}(u) \\
  & = \frac{\partial}{\partial u} \left( \log \left[ u^\a + \left( |\sx|-1 \right)^{1-\a} \tpp^\a \right] - \log \left[ \left( u + \tpp \right)^\a \right] \right) \\
	& = \left[ u^\a + \left( |\sx|-1 \right)^{1-\a} \tpp^\a \right]^{-1} \a u^{\a-1} - \alpha(u + \tpp)^{-1}
}
which will be $\leq 0$ if and only if
\begin{align}
    \frac{u^{\a-1}}{u^\a + \left( |\sx|-1 \right)^{1-\a} \tpp^\a}
    & \leq \frac{1}{u + \tpp},
\end{align}
which is easily seen to reduce to the condition
\begin{align}\label{cond}
   u \geq \tpp/(|\sx|-1)
\end{align}
Note that since $\tpp \in (0, 1-\frac{1}{|\sx|}]$, we have $1-\tpp\geq \tpp/(|\sx| -1)$. Hence for $u \geq 1 - \tpp$ the required condition \eqref{cond} holds.

\end{proof}


\begin{lem} \label{lem:BoundMonotonic}
Let $\a \in [0,1)$ and $|\sx| \geq 1$. The function $g_{\a,|\sx|} : (0, 1-\frac{1}{|\sx|}] \rightarrow [0, \infty)$ defined as 
\begin{equation}
	g_{\a,|\sx|}(u) \coloneqq \left( 1 - u \right)^\a + \left( |\sx|-1 \right)^{1-\a} u^\a
\end{equation}
is monotonically increasing in $u$.
\end{lem}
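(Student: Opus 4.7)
The plan is a direct calculus computation: differentiate $g_{\alpha,|\sx|}$ in $u$ and show the derivative is non-negative throughout the given interval. Writing
\begin{equation}
g'_{\alpha,|\sx|}(u) = \alpha\Bigl[(|\sx|-1)^{1-\alpha}\,u^{\alpha-1} - (1-u)^{\alpha-1}\Bigr],
\end{equation}
the strategy reduces to showing that the bracketed quantity is non-negative for $u\in(0,1-1/|\sx|]$.

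For the edge case $\alpha=0$ one checks separately that $g_{0,|\sx|}(u)=1+(|\sx|-1)=|\sx|$ is constant, hence trivially (weakly) monotonically increasing; in this case the derivative identity above is not useful because of the prefactor $\alpha$, so it is cleanest to dispatch this case by inspection before doing calculus.

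For $\alpha\in(0,1)$, since the prefactor $\alpha$ is positive, I only need to verify the sign of the bracket. I rearrange it as
\begin{equation}
(|\sx|-1)^{1-\alpha}\,u^{\alpha-1} \geq (1-u)^{\alpha-1}
\quad\Longleftrightarrow\quad
(|\sx|-1)^{1-\alpha} \geq \left(\frac{u}{1-u}\right)^{\!1-\alpha},
\end{equation}
using $u>0$ and $1-u>0$. Since $1-\alpha>0$, the map $t\mapsto t^{1-\alpha}$ is strictly increasing on $[0,\infty)$, so this is equivalent to $|\sx|-1\geq u/(1-u)$, which rearranges to $u\leq 1-1/|\sx|$. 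This is precisely the assumption on $u$, so $g'_{\alpha,|\sx|}(u)\geq 0$ on $(0,1-1/|\sx|]$, proving the claim.

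I do not anticipate any obstacle: the result is a routine one-variable calculus fact. The only subtlety worth flagging in the write-up is the handling of the boundary case $\alpha=0$ (where the derivative formula degenerates) and the observation that equality in the bracket occurs exactly at $u=1-1/|\sx|$, so the function is strictly increasing on the open interval and attains its maximum at the right endpoint -- a fact that is used upstream in the proof of Theorem~\ref{theo:main} when passing from $\tpp\leq\epsilon$ to the final bound.
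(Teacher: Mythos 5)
Your proof is correct and takes essentially the same route as the paper: differentiate, factor out $\alpha$, and reduce the sign condition on the bracket to $u\le 1-1/|\sx|$. The one small improvement you make is explicitly dispatching $\alpha=0$ by inspection (where the factored form degenerates and the ``iff'' chain would be vacuous), which the paper glosses over; otherwise the arguments coincide step for step.
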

\begin{proof}
Again, we simply compute the derivative
\eqaln{
	\frac{\partial}{\partial u} g_{\a,|\sx|}(u) & = \frac{\partial}{\partial u} \left[ \left( 1 - u \right)^\a + \left( |\sx|-1 \right)^{1-\a} u^\a \right] \\
	& = - \a \left( 1 - u \right)^{\a-1} + \left( |\sx|-1 \right)^{1-\a} \a u^{\a-1}
}
which is $\geq 0$ if and only if the following inequality holds:
\begin{align}\label{cond2}
	\left( |\sx|-1 \right)^{1-\a} u^{\a-1} & \geq \left( 1 - u \right)^{\a-1}.
	\end{align}
	The above inequality can easily be seen to reduce to the following:
	\begin{align}
	    \left( (|\sx|-1) (1-u)/u \right)^{1-\a} & \geq 1,
	\end{align}
	which in turn reduces to the condition $u \leq 1 - \frac{1}{|\sx|}$, which holds by the hypothesis of the lemma.

\end{proof}

\section{Illustration of the steps in the proof of Theorem~\ref{theo:main}  \label{exampleARCEproof}}

In this section of the appendix, we consider a simple $3 \times 3$ example in order to illustrate the steps in the proof of Theorem~\ref{theo:main}. In the context of this example, we relabel the matrix $q_{XY}$ simply as $q$ and its elements as $q_{ij}$. Similarly, we relabel the matrix $p_{XY}$ simply as $p$ and its elements as $p_{ij}$. The example is illustrated in the three figures given below.
\begin{figure}[ht]
\centering
	\includegraphics[width=.9\columnwidth]{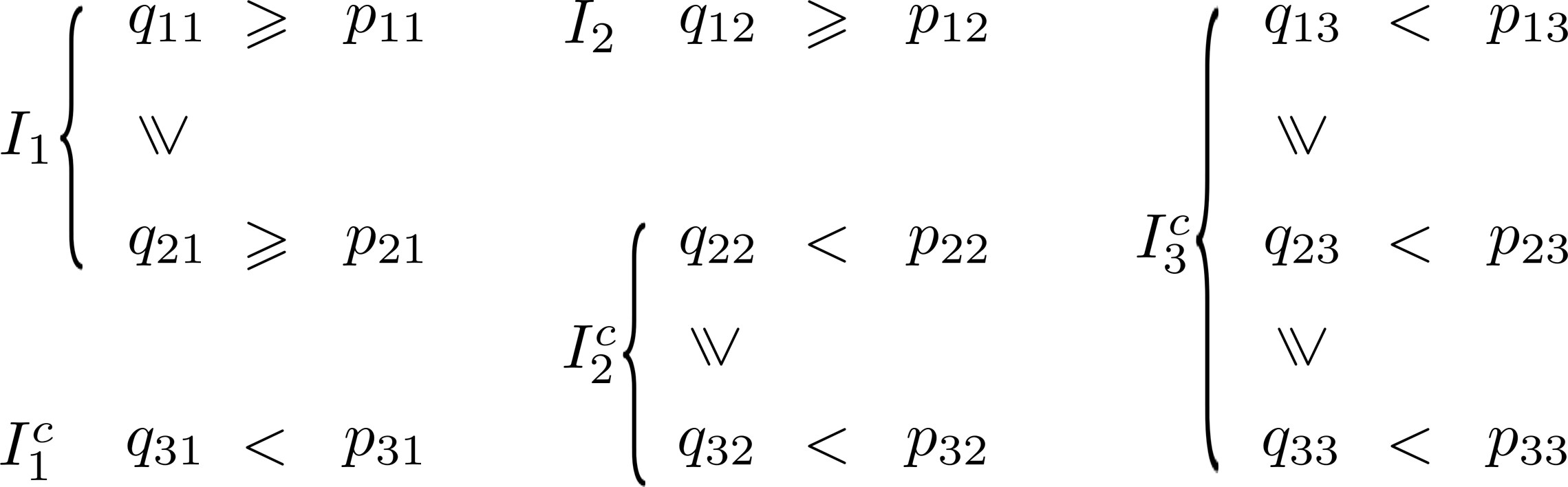}
	\caption{\label{exampleARCEproofA} Step A of the proof of Theorem~\ref{theo:main}, which corresponds to a reordering of some elements of $q$ and $p$.}
\end{figure}
\begin{figure}[ht]
\centering
	\includegraphics[width=.9\columnwidth]{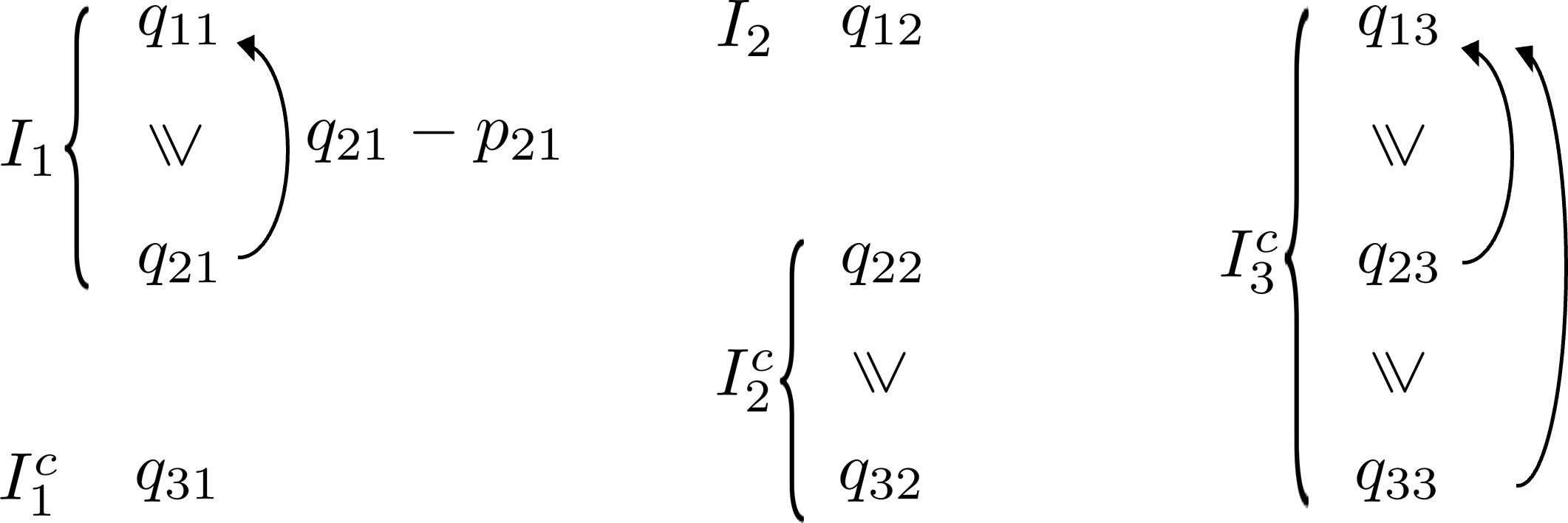}
	\caption{\label{exampleARCEproofB} Step B of the proof of Theorem~\ref{theo:main}, in which some probability weights are moved in $q$, while $p$ remains unchanged. In this example, we consider that after the moves, we end up with a new matrix $q$ that satisfies $q_{13} < p_{13}$, $q_{23} = 0$ and $q_{33} = 0$. This will affect Step C, see Figure~\ref{exampleARCEproofC}.}
\end{figure}
\begin{figure}[ht!]
\centering
	\includegraphics[width=.9\columnwidth]{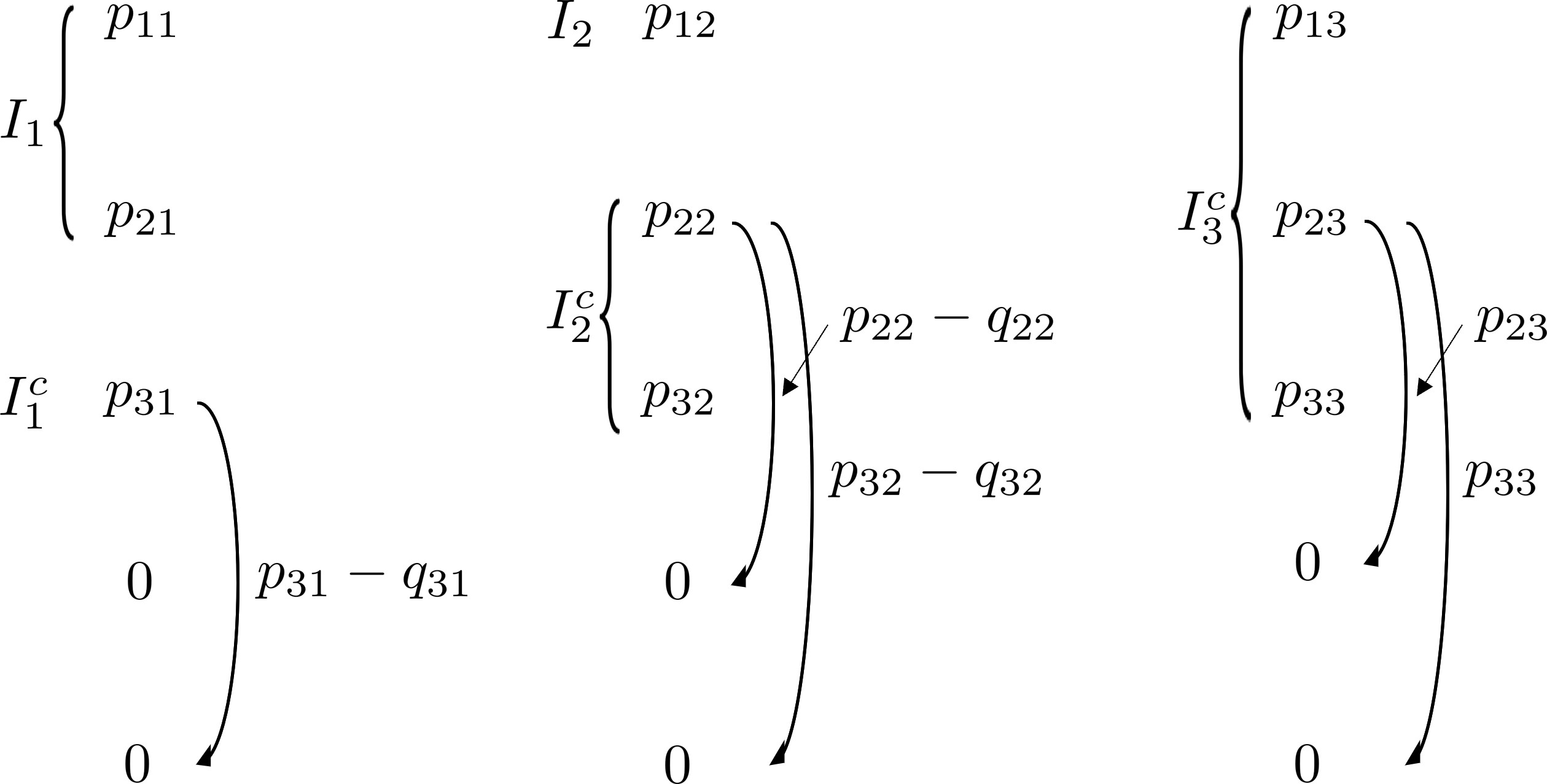}
	\caption{\label{exampleARCEproofC} Step C of the proof of Theorem~\ref{theo:main}, in which we increase the dimensions of $q$ and $p$, and some probability weights are moved in $p$.}
\end{figure}

\section*{Acknowledgment}

M. G. J. acknowledges support from the Wiener-Anspach Foundation.




\bibliographystyle{IEEEtran}
\bibliography{ARCEbound_IEEE}

\begin{thebibliography}{10}
\providecommand{\url}[1]{#1}
\csname url@samestyle\endcsname
\providecommand{\newblock}{\relax}
\providecommand{\bibinfo}[2]{#2}
\providecommand{\BIBentrySTDinterwordspacing}{\spaceskip=0pt\relax}
\providecommand{\BIBentryALTinterwordstretchfactor}{4}
\providecommand{\BIBentryALTinterwordspacing}{\spaceskip=\fontdimen2\font plus
\BIBentryALTinterwordstretchfactor\fontdimen3\font minus
  \fontdimen4\font\relax}
\providecommand{\BIBforeignlanguage}[2]{{%
\expandafter\ifx\csname l@#1\endcsname\relax
\typeout{** WARNING: IEEEtran.bst: No hyphenation pattern has been}%
\typeout{** loaded for the language `#1'. Using the pattern for}%
\typeout{** the default language instead.}%
\else
\language=\csname l@#1\endcsname
\fi
#2}}
\providecommand{\BIBdecl}{\relax}
\BIBdecl

\bibitem{Shannon}
C.~E. {Shannon}, ``A mathematical theory of communication,'' \emph{The Bell
  System Technical Journal}, vol.~27, no.~3, pp. 379--423, 1948.

\bibitem{Renyi1961}
\BIBentryALTinterwordspacing
A.~R\'enyi, ``On measures of entropy and information,'' in \emph{Proceedings of
  the Fourth Berkeley Symposium on Mathematical Statistics and Probability,
  Volume 1: Contributions to the Theory of Statistics}.\hskip 1em plus 0.5em
  minus 0.4em\relax Berkeley, Calif.: University of California Press, 1961, pp.
  547--561. [Online]. Available:
  \url{https://projecteuclid.org/euclid.bsmsp/1200512181}
\BIBentrySTDinterwordspacing

\bibitem{CoverThomas}
T.~M. Cover and J.~A. Thomas, \emph{Elements of Information Theory (Wiley
  Series in Telecommunications and Signal Processing)}.\hskip 1em plus 0.5em
  minus 0.4em\relax USA: Wiley-Interscience, 2006.

\bibitem{Arimoto}
S.~Arimoto, ``Information measures and capacity of order $\alpha$ for discrete
  memoryless channels,'' in \emph{Topics in Information Theory (Colloquia
  Mathematica Societatis Janos Bolyai)}, vol.~16.\hskip 1em plus 0.5em minus
  0.4em\relax Amsterdam, The Netherlands: North-Holland, 1977, p. 41–52.

\bibitem{Cachin}
C.~Cachin, ``\BIBforeignlanguage{en}{Entropy measures and unconditional
  security in cryptography},'' Ph.D. dissertation, ETH Zurich, Zurich, 1997.

\bibitem{PropertiesRenyi}
\BIBentryALTinterwordspacing
L.~Golshani, E.~Pasha, and G.~Yari, ``Some properties of {R\'enyi} entropy and
  {R\'enyi} entropy rate,'' \emph{Information Sciences}, vol. 179, no.~14, pp.
  2426 -- 2433, 2009, including Special Section – Linguistic Decision Making.
  [Online]. Available:
  \url{http://www.sciencedirect.com/science/article/pii/S0020025509001145}
\BIBentrySTDinterwordspacing

\bibitem{WorldRenyi}
\BIBentryALTinterwordspacing
P.~Jizba and T.~Arimitsu, ``The world according to {R\'enyi}: thermodynamics of
  multifractal systems,'' \emph{Annals of Physics}, vol. 312, no.~1, pp. 17 --
  59, 2004. [Online]. Available:
  \url{http://www.sciencedirect.com/science/article/pii/S0003491604000132}
\BIBentrySTDinterwordspacing

\bibitem{RennerWolfDefCondRenyi}
R.~Renner and S.~Wolf, ``Simple and tight bounds for information reconciliation
  and privacy amplification,'' in \emph{Advances in Cryptology - ASIACRYPT
  2005}.\hskip 1em plus 0.5em minus 0.4em\relax Berlin, Heidelberg: Springer
  Berlin Heidelberg, 2005, pp. 199--216.

\bibitem{HayashiDefCondRenyi}
M.~{Hayashi}, ``Exponential decreasing rate of leaked information in universal
  random privacy amplification,'' \emph{IEEE Transactions on Information
  Theory}, vol.~57, no.~6, pp. 3989--4001, 2011.

\bibitem{SkoricObiVerbitskiySchoenmakersDefCondRenyi}
\BIBentryALTinterwordspacing
B.~{\v S}kori{\'c}, C.~Obi, E.~Verbitskiy, and B.~Schoenmakers, ``Sharp lower
  bounds on the extractable randomness from non-uniform sources,''
  \emph{Information and Computation}, vol. 209, no.~8, pp. 1184 -- 1196, 2011.
  [Online]. Available:
  \url{http://www.sciencedirect.com/science/article/pii/S0890540111000927}
\BIBentrySTDinterwordspacing

\bibitem{FehrBerensCondRenyi}
S.~{Fehr} and S.~{Berens}, ``On the conditional {R\'enyi} entropy,'' \emph{IEEE
  Transactions on Information Theory}, vol.~60, no.~11, pp. 6801--6810, 2014.

\bibitem{TeixeiraMatosAntunesCondRenyi}
A.~{Teixeira}, A.~{Matos}, and L.~{Antunes}, ``Conditional {R\'enyi}
  entropies,'' \emph{IEEE Transactions on Information Theory}, vol.~58, no.~7,
  pp. 4273--4277, 2012.

\bibitem{Arikan}
E.~{Arikan}, ``An inequality on guessing and its application to sequential
  decoding,'' \emph{IEEE Transactions on Information Theory}, vol.~42, no.~1,
  pp. 99--105, 1996.

\bibitem{SasonVerdu2}
I.~{Sason} and S.~{Verd\'u}, ``Improved bounds on lossless source coding and
  guessing moments via {R}\'enyi measures,'' \emph{IEEE Transactions on
  Information Theory}, vol.~64, no.~6, pp. 4323--4346, 2018.

\bibitem{BracherHofLapidoth}
A.~Bracher, E.~Hof, and A.~Lapidoth, ``Guessing attacks on distributed-storage
  systems,'' in \emph{Proceedings of the 2015 IEEE International Symposium on
  Information Theory}.\hskip 1em plus 0.5em minus 0.4em\relax Hong-Kong, China:
  North-Holland, 2015, p. 1585–1589.

\bibitem{Sundaresan}
R.~{Sundaresan}, ``Guessing under source uncertainty,'' \emph{IEEE Transactions
  on Information Theory}, vol.~53, no.~1, pp. 269--287, 2007.

\bibitem{BunteLapidoth}
C.~{Bunte} and A.~{Lapidoth}, ``Encoding tasks and {R\'enyi} entropy,''
  \emph{IEEE Transactions on Information Theory}, vol.~60, no.~9, pp.
  5065--5076, 2014.

\bibitem{BunteLapidoth2}
------, ``On the listsize capacity with feedback,'' \emph{IEEE Transactions on
  Information Theory}, vol.~60, no.~11, pp. 6733--6748, 2014.

\bibitem{SasonVerdu}
I.~{S}ason and S.~{V}erd{\'u}, ``{Arimoto-R\'enyi} conditional entropy and
  {B}ayesian {$M$}-ary hypothesis testing,'' \emph{IEEE Transactions on
  Information Theory}, vol.~64, no.~1, pp. 4--25, 2018.

\bibitem{Sakai}
Y.~Sakai, ``Generalized {F}ano-type inequality for countably infinite systems
  with list-decoding,'' arXiv:1801.02876, 2018.

\bibitem{HayashiTan}
M.~Hayashi and V.~Y.~F. Tan, ``Equivocation and exponents under various
  {R\'enyi} information measures,'' \emph{IEEE Trans. on Information Theory},
  vol.~63, p. 975–1005, February 2017.

\bibitem{HayashiTan2}
------, ``Remaining uncertainties and exponents under {R\'enyi} information
  measures,'' in \emph{Proceedings of the 2016 IEEE International Symposium on
  Information Theory}, Barcelona, Spain, 2016, p. 1536–1540.

\bibitem{Sason2013}
I.~{Sason}, ``Entropy bounds for discrete random variables via maximal
  coupling,'' \emph{IEEE Transactions on Information Theory}, vol.~59, no.~11,
  pp. 7118--7131, 2013.

\bibitem{Zhang}
Z.~{Zhang}, ``Estimating mutual information via {K}olmogorov distance,''
  \emph{IEEE Transactions on Information Theory}, vol.~53, no.~9, pp.
  3280--3282, 2007.

\bibitem{AlhejjiSmith}
M.~A. Alhejji and G.~Smith, ``A tight uniform continuity bound for
  equivocation,'' arXiv:1909.00787, 2019.

\bibitem{Fannes}
\BIBentryALTinterwordspacing
M.~Fannes, ``A continuity property of the entropy density for spin lattice
  systems,'' \emph{Communications in Mathematical Physics}, vol.~31, no.~4, pp.
  291--294, Dec 1973. [Online]. Available:
  \url{https://doi.org/10.1007/BF01646490}
\BIBentrySTDinterwordspacing

\bibitem{Audenaert}
\BIBentryALTinterwordspacing
K.~M.~R. Audenaert, ``A sharp continuity estimate for the von {Neumann}
  entropy,'' \emph{Journal of Physics A: Mathematical and Theoretical},
  vol.~40, no.~28, pp. 8127--8136, Jun 2007. [Online]. Available:
  \url{https://doi.org/10.1088%2F1751-8113%2F40%2F28%2Fs18}
\BIBentrySTDinterwordspacing

\bibitem{PetzCon}
D.~Petz, \emph{Quantum information theory and quantum statistics}.\hskip 1em
  plus 0.5em minus 0.4em\relax Berlin, Heidelberg: Springer, 2008.

\bibitem{Eric1}
E.~P. Hanson and N.~Datta, ``Tight uniform continuity bound for a family of
  entropies,'' arXiv:1707.04249, 2017.

\bibitem{Eric2}
------, ``Universal proofs of entropic continuity bounds via majorization
  flow,'' arXiv:1909.06981, 2019.

\bibitem{AlickiFannes}
\BIBentryALTinterwordspacing
R.~Alicki and M.~Fannes, ``Continuity of quantum conditional information,''
  \emph{Journal of Physics A: Mathematical and General}, vol.~37, no.~5, pp.
  L55--L57, Jan 2004. [Online]. Available:
  \url{https://doi.org/10.1088%2F0305-4470%2F37%2F5%2Fl01}
\BIBentrySTDinterwordspacing

\bibitem{Winter2016}
\BIBentryALTinterwordspacing
A.~Winter, ``Tight uniform continuity bounds for quantum entropies: Conditional
  entropy, relative entropy distance and energy constraints,''
  \emph{Communications in Mathematical Physics}, vol. 347, no.~1, pp. 291--313,
  Oct 2016. [Online]. Available:
  \url{https://doi.org/10.1007/s00220-016-2609-8}
\BIBentrySTDinterwordspacing

\bibitem{Wilde2020}
\BIBentryALTinterwordspacing
M.~M. Wilde, ``Optimal uniform continuity bound for conditional entropy of
  classical--quantum states,'' \emph{Quantum Information Processing}, vol.~19,
  no.~2, p.~61, Jan 2020. [Online]. Available:
  \url{https://doi.org/10.1007/s11128-019-2563-4}
\BIBentrySTDinterwordspacing

\bibitem{CondMaj}
\BIBentryALTinterwordspacing
G.~Gour, A.~Grudka, M.~Horodecki, W.~K\l{}obus, J.~\L{}odyga, and
  V.~Narasimhachar, ``Conditional uncertainty principle,'' \emph{Phys. Rev. A},
  vol.~97, p. 042130, Apr 2018. [Online]. Available:
  \url{https://link.aps.org/doi/10.1103/PhysRevA.97.042130}
\BIBentrySTDinterwordspacing

\bibitem{Cicalese2018}
F.~{Cicalese}, L.~{Gargano}, and U.~{Vaccaro}, ``Bounds on the entropy of a
  function of a random variable and their applications,'' \emph{IEEE
  Transactions on Information Theory}, vol.~64, no.~4, pp. 2220--2230, 2018.

\bibitem{Sason2018}
I.~{Sason}, ``Tight bounds on the {R\'enyi} entropy via majorization with
  applications to guessing and compression,'' \emph{Entropy}, vol.~20, no.~12,
  p. 896, 2018.

\bibitem{Majorization}
A.~W. Marshall, I.~Olkin, and B.~C. Arnold, \emph{Inequalities: Theory of
  Majorization and its Applications}, 2nd~ed.\hskip 1em plus 0.5em minus
  0.4em\relax Springer, 2011, vol. 143.

\bibitem{HardyLittlewoodPolya}
G.~H. Hardy, J.~E. Littlewood, and G.~P{\'o}lya, ``Some simple inequalities
  satisfied by convex functions,'' \emph{Messenger Math.}, vol.~58, pp.
  145--152, 1929.

\bibitem{Petz}
\BIBentryALTinterwordspacing
D.~Petz, ``Quasi-entropies for finite quantum systems,'' \emph{Reports on
  Mathematical Physics}, vol.~23, no.~1, pp. 57 -- 65, 1986. [Online].
  Available:
  \url{http://www.sciencedirect.com/science/article/pii/0034487786900674}
\BIBentrySTDinterwordspacing

\bibitem{Tomamichel}
\BIBentryALTinterwordspacing
M.~Tomamichel, ``Quantum information processing with finite resources,''
  \emph{SpringerBriefs in Mathematical Physics}, 2016. [Online]. Available:
  \url{http://dx.doi.org/10.1007/978-3-319-21891-5}
\BIBentrySTDinterwordspacing

\end{thebibliography}

\end{document}